\documentclass[12pt,reqno]{amsart}
\usepackage{fullpage,verbatim,paralist}
\usepackage[breaklinks,pdfstartview=FitH]{hyperref}
\newcommand{\Clust}{\mathrm{\bf Clust}}
\newcommand{\Alg}{\mathrm{Alg}}
\newcommand{\SDP}{\mathrm{SDP}}
\usepackage{amssymb,fullpage,mathrsfs,verbatim,graphicx,paralist,graphics}
\newcommand{\sign}{\mathrm{sign}}

\addtolength{\footskip}{17pt}

\newcommand{\eqdef}{\stackrel{\mathrm{def}}{=}}
\renewcommand{\le}{\leqslant}

\renewcommand{\ge}{\geqslant}

\renewcommand{\geq}{\geqslant}
\renewcommand{\setminus}{\smallsetminus}

\newcommand{\N}{\mathbb{N}}
\newcommand{\E}{\mathbb{E}}
\newcommand{\Q}{\mathbb{Q}}

\newcommand{\R}{\mathbb{R}}

\newcommand{\T}{\mathcal{T}}

\newcommand{\Z}{\mathbb Z}

\newcommand{\e}{\varepsilon}

\theoremstyle{plain}
  \newtheorem{lemma}{Lemma}[section]

  \newtheorem{theorem}[lemma]{Theorem}

  \theoremstyle{definition}

  \newtheorem{remark}[lemma]{Remark}
  
\setcounter{tocdepth}{3}
\begin{document}

\title{Grothendieck-type inequalities\\ in combinatorial
optimization}\thanks{S.~K. was partially supported by NSF CAREER
grant CCF-0833228, NSF Expeditions grant CCF-0832795, an NSF
Waterman award, and BSF grant 2008059. A.~N. was partially supported
by NSF Expeditions grant CCF-0832795, BSF grant 2006009, and the
Packard Foundation.}

\author{Subhash Khot}
\address{Courant Institute, New York University, 251 Mercer Street, New York NY 10012, USA} \email{khot@cims.nyu.edu}

\author{Assaf Naor}
\address{Courant Institute, New York University, 251 Mercer Street, New York NY 10012, USA}
\email{naor@cims.nyu.edu}

\maketitle

\begin{abstract}
We survey connections of the Grothendieck inequality and its variants  to combinatorial optimization and computational complexity.
\end{abstract}

\tableofcontents

\section{Introduction}

The Grothendieck inequality asserts that there exists a universal
constant $K\in (0,\infty)$ such that for every $m,n\in \N$ and every
$m\times n$ matrix $A=(a_{ij})$ with real entries we have
\begin{multline}\label{eq:def gro}
\max\left\{\sum_{i=1}^m\sum_{j=1}^n a_{ij} \langle x_i,y_j\rangle:\ \{x_i\}_{i=1}^m,\{y_j\}_{j=1}^n\subseteq S^{n+m-1}\right\}\\ \le K \max\left\{\sum_{i=1}^m\sum_{j=1}^n a_{ij} \e_i\delta_j: \{\e_i\}_{i=1}^m,\{\delta_j\}_{j=1}^n\subseteq \{-1,1\}\right\}.
\end{multline}
Here, and in what follows, the standard scalar product on $\R^k$ is
denoted $\langle x,y\rangle =\sum_{i=1}^k x_i y_i$ and the Euclidean
sphere in $\R^k$ is denoted $S^{k-1} =\{x\in \R^k:\ \sum_{i=1}^k
x_i^2=1\}$. We refer to~\cite{DJT95,JL01} for the simplest known
proofs of the Grothendieck inequality; see
Section~\ref{sec:rounding} for a proof of~\eqref{eq:def gro}
yielding the best known bound on $K$. Grothendieck proved the
inequality~\eqref{eq:def gro} in~\cite{Gro53}, though it was stated
there in a different, but equivalent, form. The formulation of the
Grothendieck inequality appearing in~\eqref{eq:def gro} is due to
Lindenstrauss and Pe{\l}czy{\'n}ski~\cite{LP68}.

The Grothendieck inequality is of major importance to several areas,
ranging from Banach space theory to $C^*$ algebras and quantum
information theory. We will not attempt to indicate here this wide
range of applications of~\eqref{eq:def gro}, and refer instead
to~\cite{LP68,Tsi85,Pis86,Jam87,FR94,DJT95,Ble01,AGT06,Gar07,DFS08,Pit08,Pis11}
and the references therein. The purpose of this survey is to focus
solely on applications of the Grothendieck inequality and its
variants to combinatorial optimization, and to explain their
connections to computational complexity.

The infimum over those $K\in (0,\infty)$ for which~\eqref{eq:def gro} holds for all $m,n\in \N$ and all $m\times n$ matrices $A=(a_{ij})$ is called the Grothendieck constant, and is denoted $K_G$. Evaluating the exact value of $K_G$   remains a long-standing open problem, posed by Grothendieck in~\cite{Gro53}. In fact, even the second digit of $K_G$ is currently unknown, though clearly this is of lesser importance than the issue of understanding the structure of matrices $A$ and spherical configurations $\{x_i\}_{i=1}^m,\{y_j\}_{j=1}^n\subseteq S^{n+m-1}$ which make the inequality~\eqref{eq:def gro} ``most difficult".  Following a series of investigations~\cite{Gro53,LP68,Rie74,Krivine77,Kri79}, the best known upper bound~\cite{BMMN11} on $K_G$ is
\begin{equation}\label{eq:K_G current status is}
K_G<\frac{\pi}{2\log\left(1+\sqrt{2}\right)}=1.782...,
\end{equation}
and the best known lower bound~\cite{Ree91} on $K_G$ is
\begin{equation}\label{eq:reeds}
K_G\ge \frac{\pi}{2}e^{\eta_0^2}=1.676...,
\end{equation}
where $\eta_0=0.25573...$ is the unique solution  of the equation
$$
1-2\sqrt{\frac{2}{\pi}}\int_0^\eta e^{-z^2/2}dz=\frac{2}{\pi}e^{-\eta^2}.
$$
In~\cite{RS09} the problem of estimating $K_G$ up to an additive
error of $\e\in (0,1)$ was reduced to an optimization over a compact
space, and by exhaustive search over an appropriate net it was shown
that there exists an algorithm that computes $K_G$ up to an additive
error of $\e\in (0,1)$ in time $\exp(\exp(O(1/\e^3)))$. It does not
seem likely that this approach can yield computer assisted proofs of
estimates such as~\eqref{eq:K_G current status is}
and~\eqref{eq:reeds}, though to the best of our knowledge this has
not been attempted.

In the above discussion we focused on the classical Grothendieck
inequality~\eqref{eq:def gro}. However, the literature contains
several variants and extensions of~\eqref{eq:def gro} that have been
introduced for various purposes and applications in the decades
following Grothendieck's original work. In this survey we describe
some of these variants, emphasizing relatively recent developments
that yielded Grothendieck-type inequalities that are a useful tool
in the design of polynomial time algorithms for computing
approximate solutions of computationally hard optimization problems.
In doing so, we omit some important topics, including applications
of the Grothendieck inequality to communication complexity and
quantum information theory. While these research directions can be
viewed as dealing with a type of optimization problem, they are of a
different nature than the applications described here, which belong
to classical optimization theory. Connections to communication
complexity have already been covered in the survey of Lee and
Shraibman~\cite{LeS07}; we refer in addition
to~\cite{LMSS07,LSS09,LS09-1,LS09} for more information on this
topic. An explanation of the relation of the Grothendieck inequality
to quantum mechanics is contained in Section~19 of Pisier's
survey~\cite{Pis11}, the pioneering work in this direction being
that of Tsirelson~\cite{Tsi85}. An investigation of these questions
from a computational complexity point of view was initiated
in~\cite{CHTW04}, where it was shown, for example, how to obtain a
polynomial time algorithm for computing the entangled value of an
XOR game based on Tsirelson's work. We hope that the developments
surrounding applications of the Grothendieck inequality in quantum
information theory will eventually be surveyed separately by experts
in this area. Interested readers are referred
to~\cite{Tsi85,FR94,CHTW04,AGT06,Hey06,PWPVJ08,Pit08,KKMTV08,BBT09,LSS09,LS09,RT09,Pis11}.
Perhaps the most influential variants of the Grothendieck inequality
are its noncommutative generalizations. The noncommutative versions
in~\cite{Pisier78,Haa85} were conjectured by Grothendieck
himself~\cite{Gro53}; additional extensions to operator spaces are
extensively discussed in Pisier's survey~\cite{Pis11}. We will not
describe these developments here, even though we believe that they
might have applications to optimization theory. Finally,
multi-linear extensions of the Grothendieck inequality have also
been investigated in the literature; see for
example~\cite{Var74,Ton78,Ble79,Smi88} and especially Blei's
book~\cite{Ble01}. We will not cover this research direction since
its relation to classical combinatorial optimization has not (yet?)
been established, though there are recent investigations of
multi-linear Grothendieck inequalities in the context of quantum
information theory~\cite{PWPVJ08,LSS09}.

Being a mainstay of functional analysis, the Grothendieck inequality
might attract to this survey readers who are not familiar with
approximation algorithms and computational complexity. We wish to
encourage such readers to persist beyond this introduction so that
they will be exposed to, and hopefully eventually contribute to, the
use of analytic tools in combinatorial optimization. For this reason
we include Sections~\ref{sec:complexity assumptions}, \ref{sec:SDP}
below; two very basic introductory sections intended to quickly
provide
 background on computational complexity and convex programming for non-experts.

\subsection{Assumptions from computational complexity}\label{sec:complexity assumptions}

At present there are few unconditional results on the limitations of
polynomial time computation. The standard practice in this field is
to frame an impossibility result in  computational complexity by
asserting that the polynomial time solvability of a certain
algorithmic task would contradict a benchmark hypothesis. We briefly
describe below two key hypotheses of this type.

A graph $G=(V,E)$ is $3$-colorable if there exists a partition
$\{C_1,C_2,C_3\}$ of $V$ such that for every $i\in \{1,2,3\}$ and
$u,v\in C_i$ we have $\{u,v\}\notin E$. The $P\neq NP$ hypothesis
asserts that there is no polynomial time algorithm that takes an
$n$-vertex graph as input and determines whether or not it is
$3$-colorable. We are doing an injustice to this important question
by stating it this way, since it has many far-reaching equivalent
formulations. We refer to~\cite{GJ79,Sip97,Coo06} for more
information, but for non-experts it suffices to keep the above
simple formulation in mind.

When we say that assuming $P\neq NP$ no polynomial time algorithm
can perform a certain task $\mathcal T$ (e.g., evaluating the
maximum of a certain function up to a predetermined error) we mean
that given an algorithm $ALG$ that performs the task $\mathcal T$ one can
design an algorithm $ALG'$ that determines whether or not any input
graph is $3$-colorable while making at most polynomially many calls
to the algorithm $ALG$, with at most polynomially many additional
Turing machine steps.
Thus, if $ALG$ were a polynomial time algorithm then the same would
be true for $ALG'$, contradicting the $P\neq NP$ hypothesis. Such
results are called hardness results. The message that non-experts
should keep in mind is that a hardness result is nothing more than
the design of a new algorithm for $3$-colorability, and if one
accepts the $P\neq NP$ hypothesis then it implies that there must
exist inputs on which $ALG$ takes super-polynomial time to
terminate.

The Unique Games Conjecture (UGC) asserts that for every $\e\in
(0,1)$ there exists a prime $p=p(\e)\in \N$ such that no polynomial
time algorithm can perform the following task. The input is a system
of $m$ linear equations in $n$ variables $x_1,\ldots,x_n$, each of
which has the form $x_i-x_j\equiv c_{ij}\mod p$ (thus the input is
 $S\subseteq \{1,\ldots,n\}\times \{1,\ldots,n\}$ and
$\{c_{ij}\}_{(i,j)\in S}\subseteq \N$). The algorithm must determine
whether there exists an assignment of an integer value to each
variable $x_i$ such that at least $(1-\e)m$ of the equations are
satisfied, or whether no assignment of such values can satisfy more
than $\e m$ of the equations. If neither of these possibilities
occur, then an arbitrary output is allowed.

As in the case of $P\neq NP$, saying that assuming the UGC no
polynomial time algorithm can perform a certain task $\mathcal T$ is
the same as designing a polynomial time algorithm that solves the
above linear equations problem while making at most polynomially
many calls to a ``black box" that can perform the task $\mathcal T$.
The UGC was introduced in~\cite{Khot02}, though the above
formulation of it, which is equivalent to the original one, is due
to~\cite{KKMO07}. The use of the UGC as a hardness hypothesis has
become popular over the past decade; we refer to the
survey~\cite{Kho10} for more information on this topic.

To simplify matters (while describing all the essential ideas),  we
allow polynomial time algorithms to be randomized. Most (if not all)
of the algorithms described here can be turned into deterministic
algorithms, and corresponding hardness results can be stated equally
well in the context randomized or deterministic algorithms.  We will
ignore these distinctions, even though they are important. Moreover,
it is widely believed that in our context these distinctions do not
exist, i.e., randomness does not add computational power to
polynomial time algorithms; see for example the discussion of the
$NP\not \subseteq BPP$ hypothesis in~\cite{AB09}.

\subsection{Convex and semidefinite programming}\label{sec:SDP}
An important paradigm of optimization theory is that one can
efficiently optimize linear functionals over compact convex sets
that have a ``membership oracle". A detailed exposition of this
statement is contained in~\cite{GLS93}, but for the sake of
completeness we now quote the precise formulation of the results
that will be used in this article.

Let $K\subseteq \R^n$ be a compact convex set. We are also given a
point $z\in \Q^n$ and two radii $r,R\in (0,\infty)\cap \Q$ such that
$B(z,r)\subseteq K\subseteq B(z,R)$, where $B(z,t)=\{x\in \R^n:\
\|x-z\|_2\le t\}$. In what follows, stating that an algorithm is
polynomial means that we allow the running time to grow at most
polynomially in the number of bits required to represent the data
$(z,r,R)$. Thus, if, say, $z=0$, $r=2^{-n}$ and $R=2^n$ then the
running time will be polynomial in the dimension $n$. Assume that
there exists an algorithm $ALG$ with the following properties. The
input of $ALG$ is a vector $y\in \Q^n$ and  $\e\in (0,1)\cap \Q$.
The running time of $ALG$ is polynomial in $n$ and the number of
bits required to represent the data $(\e,y)$. The output of $ALG$ is
the assertion that either the distance of $y$ from $K$ is at most
$\e$, or that the distance of $y$ from the complement of $K$ is at
most $\e$. Then there exists an algorithm $ALG'$ that takes as input
a vector $c=(c_1,\ldots,c_n)\in \Q^n$ and $\e\in (0,1)\cap \Q$ and
outputs a vector $y=(y_1,\ldots,y_n)\in \R^n$ that is at distance at
most $\e$ from $K$ and for every $x=(x_1,\ldots,x_n)\in K$ that is
at distance greater than $\e$ from the complement of $K$ we have
$\sum_{i=1}^n c_iy_i\ge \sum_{i=1}^n c_i x_i-\e$. The running time
of $ALG'$ is allowed to grow at most polynomially in $n$ and the
number of bits required to represent the data $(z,r,R,c,\e)$. This
important result is due to~\cite{JN76}; we refer to~\cite{GLS93} for
an excellent account of this theory.

The above statement is a key tool in optimization, as it yields a
polynomial time method to compute the maximum of linear functionals
on a given convex body with arbitrarily good precision. We note
 the following special case of this method, known as
semidefinite programming. Assume that $n=k^2$ and think of $\R^n$ as
the space of all $k\times k$ matrices. Assume that we are given a
compact convex set $K\subseteq \R^n$ that satisfies the above
assumptions, and that for a given $k\times k$ matrix $(c_{ij})$ we
wish to compute in polynomial time (up to a specified additive
error) the maximum of $\sum_{i=1}^k\sum_{j=1}^k c_{ij}x_{ij}$ over
the set of symmetric positive semidefinite matrices $(x_{ij})$ that
belong to $K$. This can indeed be done, since determining whether a
given symmetric matrix is (approximately) positive semidefinite is
an eignevalue computation and hence can be performed in polynomial
time. The use of semidefinite programming to design approximation
algorithms is by now a deep theory of fundamental importance to
several areas of theoretical computer science. The
Goemans-Williamson MAX-CUT algorithm~\cite{GW95} was a key
breakthrough in this context. It is safe to say that after the
discovery of this
 algorithm the field of approximation algorithms
was transformed, and many subsequent results, including those
presented in the present article, can be described as attempts to
mimic the success of the Goemans-Williamson approach in other
contexts.

\section{Applications of the classical Grothendieck
inequality}\label{sec:classical}

The classical Grothendieck inequality~\eqref{eq:def gro} has
applications to algorithmic questions of central interest. These
applications will be described here in some detail. In
Section~\ref{sec:cut} we discuss the cut norm estimation problem,
whose relation to the Grothendieck inequality was first noted
in~\cite{AN06}. This is a generic combinatorial optimization problem
that contains well-studied questions as subproblems. Examples of its
usefulness are presented in Sections~\ref{sec:zemeredi},
\ref{sec:FK}, \ref{sec:acyclic}, \ref{sec:linear}.
Section~\ref{sec:rounding} is devoted to the rounding problem,
including the (algorithmic) method behind the proof of the best
known upper bound on the Grothendieck constant.

\subsection{Cut norm estimation}\label{sec:cut}
Let $A=(a_{ij})$ be an $m\times n$ matrix with real entries.  The cut norm of $A$ is defined as follows
\begin{equation}\label{eq:def cut norm}
\|A\|_{cut}= \max_{\substack{S\subseteq \{1,\ldots,m\}\\T\subseteq \{1,\ldots,n\}}}\left|\sum_{\substack{i\in S\\j\in T}}a_{ij}\right|.
\end{equation}
We will now explain how the Grothendieck inequality can be used to obtain a polynomial time algorithm for the following problem. The input is an $m\times n$ matrix $A=(a_{ij})$ with real entries, and the goal of the algorithm is to output in polynomial time a number $\alpha$ that is guaranteed to satisfy
\begin{equation}\label{eq:value goal}
\|A\|_{cut}\le \alpha\le C\|A\|_{cut},
\end{equation}
where $C$ is a (hopefully not too large) universal constant. A closely related algorithmic goal is to output in polynomial time two subsets $S_0\subseteq \{1,\ldots,m\}$ and $T_0\subseteq \{1,\ldots,n\}$ satisfying
\begin{equation}\label{eq:rounding goal}
\left|\sum_{\substack{i\in S_0\\j\in T_0}}a_{ij}\right|\ge \frac{1}{C}\|A\|_{cut}.
\end{equation}

The link to the Grothendieck inequality is made via two simple transformations. Firstly, define an $(m+1)\times (n+1)$ matrix $B=(b_{ij})$ as follows.

\begin{equation}\label{eq:def B'}
B=\begin{pmatrix} a_{11} &  a_{12}& \dots& a_{1n}& -\sum_{k=1}^n a_{1k}\\
  a_{21} & a_{22}& \dots& a_{2n} & -\sum_{k=1}^n a_{2k}\\
  \vdots & \vdots & \ddots & \vdots & \vdots\\
            a_{m1} & a_{m2} & \dots& a_{mn} &-\sum_{k=1}^n a_{mk}\\
              -\sum_{\ell=1}^m a_{\ell 1} & -\sum_{\ell=1}^m a_{\ell 2}& \dots &-\sum_{\ell=1}^m a_{\ell n}&\sum_{k=1}^n\sum_{\ell=1}^m a_{\ell k}
                       \end{pmatrix}.
\end{equation}
Observe that
\begin{equation}\label{eq:cut not changed}
\|A\|_{cut}=\|B\|_{cut}.
\end{equation}
 Indeed, for every $S\subseteq \{1,\ldots,m+1\}$ and $T\subseteq \{1,\ldots,n+1\}$ define $S^*\subseteq \{1,\ldots,m\}$ and $T^*\subseteq \{1,\ldots,n\}$ by
$$
S^*=\left\{\begin{array}{ll}S &\mathrm{if}\ m+1\notin S,\\
\{1,\ldots,m\}\setminus S& \mathrm{if}\ m+1\in S,\end{array}\right.\quad\mathrm{and}\quad T^*=\left\{\begin{array}{ll}T &\mathrm{if}\ n+1\notin T,\\
\{1,\ldots,n\}\setminus T& \mathrm{if}\ n+1\in T.\end{array}\right.
$$
One checks that for all $S\subseteq \{1,\ldots,m+1\}$ and $T\subseteq \{1,\ldots,n+1\}$ we have
$$
\left|\sum_{\substack{i\in S\\j\in T}}b_{ij}\right|=\left|\sum_{\substack{i\in S^*\\j\in T^*}}a_{ij}\right|,
$$
implying~\eqref{eq:cut not changed}. We next claim that
\begin{equation}\label{eq:operator norm}
\|B\|_{cut}=\frac14\|B\|_{\infty\to 1},
\end{equation}
where
\begin{equation}\label{eq:def infty to 1}
\|B\|_{\infty\to 1}=\max\left\{\sum_{i=1}^{m+1}\sum_{j=1}^{n+1} b_{ij} \e_i\delta_j: \{\e_i\}_{i=1}^{m+1},\{\delta_j\}_{j=1}^{n+1}\subseteq \{-1,1\}\right\}.
\end{equation}
To explain this notation observe that $\|B\|_{\infty\to 1}$ is the
norm of $B$ when viewed as a linear operator from $\ell_\infty^n$ to
$\ell_1^m$. Here, and in what follows, for $p\in [1,\infty]$ and
$k\in \N$ the space $\ell_p^k$ is $\R^k$ equipped with the $\ell_p$
norm $\|\cdot\|_p$, where $\|x\|_p^p=\sum_{\ell=1}^k |x_\ell|^p$ for
$x=(x_1,\ldots,x_k)\in \R^k$ (for $p=\infty$ we set as usual
$\|x\|_\infty=\max_{i\in \{1,\ldots,n\}}|x_i|$). Though it is
important, this operator theoretic interpretation of the quantity
$\|B\|_{\infty\to 1}$ will not have any role in this survey, so it
may be harmlessly ignored at first reading.

The proof of~\eqref{eq:operator norm} is simple: for $\{\e_i\}_{i=1}^{m+1},\{\delta_j\}_{j=1}^{n+1}\subseteq \{-1,1\}$ define $S^+,S^-\subseteq \{1,\ldots,m+1\}$ and $T^+,T^-\subseteq \{1,\ldots, n+1\}$ by setting $S^\pm=\{i\in \{1,\ldots,m+1\}:\ \e_i=\pm1\}$ and $T^\pm=\{j\in \{1,\ldots,n+1\}:\ \delta_j=\pm1\}$. Then
\begin{equation}\label{eq:4 bound}
\sum_{i=1}^{m+1}\sum_{j=1}^{n+1} b_{ij} \e_i\delta_j=\sum_{\substack{i\in S^+\\j\in T^+}} b_{ij}+\sum_{\substack{i\in S^-\\j\in T^-}} b_{ij}-\sum_{\substack{i\in S^+\\j\in T^-}} b_{ij}-\sum_{\substack{i\in S^-\\j\in T^+}} b_{ij}\le 4\|B\|_{cut}.
\end{equation}
This shows that $\|B\|_{\infty\to 1}\le 4\|B\|_{cut}$ (for any
matrix $B$, actually, not just the specific choice in~\eqref{eq:def
B'}; we will use this observation later, in
Section~\ref{sec:acyclic}). In the reverse direction, given
$S\subseteq \{1,\ldots,m+1\}$ and $T\subseteq \{1,\ldots,n+1\}$
define for $i\in \{1,\ldots,m+1\}$ and $j\in \{1,\ldots,n+1\}$,
$$
\e_i=\left\{\begin{array}{ll} 1&\mathrm{if}\ i\in S,\\ -1 &\mathrm{if}\ i\notin S,\end{array}\right.\quad\mathrm{and}\quad
\delta_j=\left\{\begin{array}{ll} 1&\mathrm{if}\ j\in T,\\ -1 &\mathrm{if}\ j\notin T.\end{array}\right.
$$
Then, since the sum of each row and each column of $B$ vanishes,
$$
\sum_{\substack{i\in S\\ j\in T}}b_{ij}=\sum_{i=1}^{m+1}\sum_{j=1}^{n+1} b_{ij}\frac{1+\e_i}{2}\cdot\frac{1+\delta_j}{2}=\frac14 \sum_{i=1}^{m+1}\sum_{j=1}^{n+1} b_{ij}\e_i\delta_j\le \frac14\|B\|_{\infty\to 1}.
$$
This completes the proof of~\eqref{eq:operator norm}. We summarize the above simple transformations in the following lemma.

\begin{lemma}\label{lem:new matrix}
Let $A=(a_{ij})$ be an $m\times n$ matrix with real entries and let $B=(b_{ij})$ be the $(m+1)\times (n+1)$ matrix given in~\eqref{eq:def B'}. Then
$$
\|A\|_{cut}=\frac14 \|B\|_{\infty\to 1}.
$$
\end{lemma}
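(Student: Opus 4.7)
The plan is to deduce the lemma by chaining together the two identities that have already been established in the text immediately preceding its statement. The first identity is $\|A\|_{cut}=\|B\|_{cut}$, and the second is $\|B\|_{cut}=\tfrac14\|B\|_{\infty\to 1}$; concatenating them is the entire content of the lemma.

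For the first identity, the key observation is the complementation map $S\mapsto S^*$, $T\mapsto T^*$ that sends subsets of $\{1,\ldots,m+1\}\times\{1,\ldots,n+1\}$ to subsets of $\{1,\ldots,m\}\times\{1,\ldots,n\}$ by deleting the extra index (resp.\ complementing) depending on whether $m+1$ (resp.\ $n+1$) lies in the set. Using the explicit definition of the last row and last column of $B$ in~\eqref{eq:def B'}, one checks case by case (four cases according to whether $m+1\in S$ and whether $n+1\in T$) that $\sum_{i\in S,\,j\in T} b_{ij}=\sum_{i\in S^*,\,j\in T^*} a_{ij}$. Since the map $(S,T)\mapsto (S^*,T^*)$ is a bijection onto subsets of $\{1,\ldots,m\}\times\{1,\ldots,n\}$, this gives $\|A\|_{cut}=\|B\|_{cut}$.

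For the second identity, the ``$\|B\|_{\infty\to 1}\le 4\|B\|_{cut}$'' direction is general and proved by splitting $\{-1,1\}$-valued vectors $\e,\delta$ into their $\pm1$ level sets $S^\pm,T^\pm$ and bounding each of the four resulting block sums by $\|B\|_{cut}$, as in~\eqref{eq:4 bound}. The reverse direction ``$\|B\|_{cut}\le \tfrac14\|B\|_{\infty\to 1}$'' is where the specific structure of $B$ matters: given $S,T$, choose $\e_i=2\mathbf{1}_S(i)-1$ and $\delta_j=2\mathbf{1}_T(j)-1$, so that $\mathbf{1}_S(i)\mathbf{1}_T(j)=\tfrac14(1+\e_i)(1+\delta_j)$; expanding the product yields four terms, and the three ``cross'' terms involving $\sum_i b_{ij}$ or $\sum_j b_{ij}$ alone vanish because every row and column sum of $B$ is zero by construction, leaving $\sum_{i\in S,\,j\in T}b_{ij}=\tfrac14\sum_{i,j}b_{ij}\e_i\delta_j\le \tfrac14\|B\|_{\infty\to 1}$.

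Combining the two identities immediately gives $\|A\|_{cut}=\tfrac14\|B\|_{\infty\to 1}$. There is no real obstacle here: both identities are already written out essentially in full in the preceding paragraphs, so the ``proof'' is just to package them as a single statement. The only point that requires any care is verifying the case analysis for the complementation identity and, in the reverse direction of~\eqref{eq:operator norm}, remembering to invoke the vanishing row and column sums of $B$; these are precisely the features engineered into the definition~\eqref{eq:def B'}.
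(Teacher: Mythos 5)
Your proposal is correct and follows exactly the same route as the paper: it establishes the two identities $\|A\|_{cut}=\|B\|_{cut}$ (via the complementation map $(S,T)\mapsto(S^*,T^*)$) and $\|B\|_{cut}=\tfrac14\|B\|_{\infty\to 1}$ (via the $\pm1$ level sets in one direction and the vanishing row and column sums of $B$ in the other), then concatenates them. This matches the paper's proof, which is indeed presented as exactly this pair of observations in the paragraphs preceding the lemma.
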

A consequence of Lemma~\ref{lem:new matrix} is that the problem of approximating $\|A\|_{cut}$ in polynomial time is equivalent to the problem of approximating $\|A\|_{\infty \to 1}$ in polynomial time in the sense that any algorithm for one of these problems can be used to obtain an algorithm for the other problem with the same running time (up to constant factors) and the same (multiplicative) approximation guarantee.

Given an $m\times n$ matrix $A=(a_{ij})$ consider the following quantity.
\begin{equation}\label{eq:def sdp}
\SDP(A)=\max\left\{\sum_{i=1}^m\sum_{j=1}^n a_{ij} \langle x_i,y_j\rangle:\ \{x_i\}_{i=1}^m,\{y_j\}_{j=1}^n\subseteq S^{n+m-1}\right\}.
\end{equation}
The maximization problem in~\eqref{eq:def sdp} falls into the
framework of semidefinite programming as discussed in
Section~\ref{sec:SDP}. Therefore $\SDP(A)$ can be computed in
polynomial time with arbitrarily good precision. It is clear that
$\SDP(A)\ge \|A\|_{\infty\to 1}$, because the maximum
in~\eqref{eq:def sdp} is over a bigger set than the maximum
in~\eqref{eq:def infty to 1}. The Grothendieck inequality says that
$\SDP(A)\le K_G\|A\|_{\infty\to 1}$, so we have
$$
\|A\|_{\infty\to 1}\le \SDP(A)\le K_G\|A\|_{\infty\to 1}.
$$
Thus, the polynomial time algorithm that outputs the number $\SDP(A)$ is guaranteed to be within a factor of $K_G$ of $\|A\|_{\infty\to 1}$. By Lemma~\ref{lem:new matrix}, the algorithm that outputs the number $\alpha=\frac14\SDP(B)$, where the matrix $B$ is as in~\eqref{eq:def B'}, satisfies~\eqref{eq:value goal} with $C=K_G$.

Section~\ref{sec:hardness} is devoted to algorithmic impossibility results. But, it is worthwhile to make at this juncture two comments  regarding hardness of approximation. First of all, unless $P=NP$, we need to introduce an error $C>1$ in our requirement~\eqref{eq:value goal}. This was observed in~\cite{AN06}: the classical MAXCUT problem from algorithmic graph theory was shown in~\cite{AN06} to be a special case of the problem of computing $\|A\|_{cut}$, and therefore by~\cite{Haa01} we know that unless $P=NP$ there does not exist a polynomial time algorithm that outputs a number $\alpha$ satisfying~\eqref{eq:value goal} with $C$ strictly smaller than $\frac{17}{16}$. In fact, by a reduction to the MAX DICUT problem one can show that $C$ must be at least $\frac{13}{12}$, unless $P=NP$; we refer to Section~\ref{sec:hardness} and~\cite{AN06} for more information on this topic.

Another (more striking) algorithmic impossibility result is based on the Unique Games Conjecture (UGC). Clearly the above algorithm cannot yield an approximation guarantee strictly smaller than $K_G$ (this is the definition of $K_G$). In fact, it was shown in~\cite{RS09} that unless the UGC is false, for every $\e\in (0,1)$ any polynomial time algorithm for estimating $\|A\|_{cut}$ whatsoever, and not only the specific algorithm described above, must make an error of at least $K_G-\e$ on some input matrix $A$. Thus, if we assume the UGC then the classical Grothendieck constant has a complexity theoretic interpretation: it equals the best approximation ratio of polynomial time algorithms for the cut norm problem. Note that~\cite{RS09} manages to prove this statement  despite the fact that the value of $K_G$ is unknown.

We have thus far ignored the issue of finding in polynomial time the
subsets $S_0,T_0$ satisfying~\eqref{eq:rounding goal}, i.e., we only
explained how the  Grothendieck inequality can be used for
polynomial time estimation of the quantity $\|A\|_{cut}$ without
actually finding efficiently subsets at which $\|A\|_{cut}$ is
approximately attained. In order to do this we cannot use the
Grothendieck inequality as a black box: we need to look into its
proof and argue that it yields a polynomial time procedure that
converts vectors $\{x_i\}_{i=1}^m,\{y_j\}_{j=1}^n\subseteq
S^{n+m-1}$ into signs
$\{\e_i\}_{i=1}^m,\{\delta_j\}_{j=1}^n\subseteq \{-1,1\}$ (this is
known as a rounding procedure). It is indeed possible to do so, as
explained in Section~\ref{sec:rounding}. We postpone the explanation
of the rounding procedure that hides behind the Grothendieck
inequality in order to first give examples why one might want to
efficiently compute the cut norm of a matrix.

\subsubsection{Szemer\'edi partitions}\label{sec:zemeredi}
The Szemer\'edi regularity lemma~\cite{Sze78} (see also~\cite{KR03})
is a general and very useful structure theorem for graphs, asserting
(informally) that any graph can be partitioned into a controlled
number of pieces that interact with each other in a pseudo-random
way. The Grothendieck inequality, via the cut norm estimation
algorithm, yields a polynomial time algorithm that, when given a
graph $G=(V,E)$ as input, outputs a partition of $V$ that satisfies
the conclusion of the Szemer\'edi regularity lemma.

To make the above statements formal, we need to recall some definitions. Let $G=(V,E)$ be a graph. For every disjoint $X,Y\subseteq V$ denote the number of edges joining $X$ and $Y$ by $e(X,Y)=|\{(u,v)\in X\times Y:\ \{u,v\}\in E\}|$. Let $X,Y\subseteq V$ be disjoint and nonempty, and fix $\e,\delta\in (0,1)$. The pair of vertex sets $(X,Y)$ is called $(\e,\delta)$-regular if for every $S\subseteq X$ and $T\subseteq Y$ that are not too small, the quantity $\frac{e(S,T)}{|S|\cdot |T|}$ (the density of edges between $S$ and $T$) is essentially independent of the pair $(S,T)$ itself. Formally, we require that for every $S\subseteq X$ with $|S|\ge \delta |X|$ and every $T\subseteq Y$ with $|T|\ge \delta |Y|$ we have
\begin{equation}\label{eq:def regularity}
\left|\frac{e(S,T)}{|S|\cdot |T|}-\frac{e(X,Y)}{|X|\cdot |Y|}\right|\le \e.
\end{equation}
The almost uniformity of the numbers  $\frac{e(S,T)}{|S|\cdot |T|}$
as exhibited in~\eqref{eq:def regularity} says that the pair $(X,Y)$
is ``pseudo-random", i.e., it is similar to a random bipartite graph
where each $(x,y)\in X\times Y$ is joined by an edge independently
with probability $\frac{e(X,Y)}{|X|\cdot |Y|}$.

The Szemer\'edi regularity lemma says that for all $\e,\delta,\eta\in (0,1)$ and $k\in \N$ there exists $K=K(\e,\delta,\eta,k)\in \N$ such that for all $n\in \N$ any $n$-vertex graph $G=(V,E)$ can be partitioned into $m$-sets $S_1,\ldots,S_m\subseteq V$ with the following properties
\begin{itemize}
\item $k\le m\le K$,
\item $|S_i|-|S_j|\le 1$ for all $i,j\in \{1,\ldots,m\}$,
\item the number of $i,j\in \{1,\ldots, m\}$ with $i<j$ such that the pair $(S_i,S_j)$ is $(\e,\delta)$-regular is at least $(1-\eta)\binom{m}{2}$.
\end{itemize}
Thus  every graph is almost a superposition of a bounded number of pseudo-random graphs, the key point being that $K$ is independent of $n$ and the specific combinatorial structure of the graph in question.

It would be of interest to have a way to produce a Szemer\'edi
partition in polynomial time with $K$ independent of $n$ (this is a
good example of an approximation algorithm: one might care to find
such a partition into the minimum possible number of pieces, but
producing any partition into boundedly many pieces is already a
significant achievement). Such a polynomial time algorithm was
designed in~\cite{ADLRY94} (see also~\cite{KRT03}). We refer
to~\cite{ADLRY94,KRT03} for applications of algorithms for
constructing Szemer\'edi partitions, and to~\cite{ADLRY94} for a
discussion of the computational complexity of this algorithmic task.
We shall now explain how the Grothendieck inequality yields a
different approach to this problem, which has some advantages
over~\cite{ADLRY94,KRT03} that will be described later. The argument
below is due to~\cite{AN06}.

Assume that $X,Y$ are disjoint $n$-point subsets of a graph $G=(V,E)$. How can we determine in polynomial time whether or not the pair $(X,Y)$ is close to being $(\e,\delta)$-regular? It turns out that this is the main ``bottleneck" towards our goal to construct Szemer\'edi partitions in polynomial time. To this end consider the following $n\times n$ matrix $A=(a_{xy})_{(x,y)\in X\times Y}$.
\begin{equation}\label{eq:def A szem}
a_{xy}=\left\{\begin{array}{ll}1-\frac{e(X,Y)}{|X|\cdot |Y|}&\mathrm{if}\ \{x,y\}\in E,\\
-\frac{e(X,Y)}{|X|\cdot |Y|}& \mathrm{if}\ \{x,y\}\notin E.\end{array}\right.
\end{equation}
By the definition of $A$, if $S\subseteq X$ and $T\subseteq Y$ then
\begin{equation}\label{eq:szem idendity}
\left|\sum_{\substack{x\in S\\y\in T}} a_{xy}\right|=|S|\cdot|T|\cdot \left|\frac{e(S,T)}{|S|\cdot |T|}-\frac{e(X,Y)}{|X|\cdot |Y|}\right|.
\end{equation}
Hence if $(X,Y)$ is not $(\e,\delta)$-regular then $\|A\|_{cut}\ge \e\delta^2n^2$. The approximate cut norm  algorithm based on the Grothendieck inequality, together with the rounding procedure in Section~\ref{sec:rounding}, finds in polynomial time  subsets $S\subseteq X$ and $T\subseteq Y$ such that
\begin{equation*}
 \min \left\{n|S|,n|T|,n^2\left|\frac{e(S,T)}{|S|\cdot |T|}-\frac{e(X,Y)}{|X|\cdot |Y|}\right|\right\}\stackrel{\eqref{eq:szem idendity}}{\ge}\left|\sum_{\substack{x\in S\\y\in T}} a_{xy}\right|\ge \frac{1}{K_G}\e\delta^2n^2\ge \frac12 \e\delta^2n^2.
\end{equation*}
This establishes the following lemma.
\begin{lemma}\label{lem:alg reg}
There exists a polynomial time algorithm that takes as input two disjoint $n$-point subsets $X,Y$ of a graph, and either decides that $(X,Y)$ is $(\e,\delta)$-regular or finds $S\subseteq X$ and $T\subseteq Y$ with
$$
|S|,|T|\ge \frac12 \e\delta^2 n\quad \mathrm{and}\quad  \left|\frac{e(S,T)}{|S|\cdot |T|}-\frac{e(X,Y)}{|X|\cdot |Y|}\right|\ge \frac12 \e\delta^2.
$$
\end{lemma}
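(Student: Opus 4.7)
The proof essentially assembles the ingredients laid out in the discussion immediately preceding the statement. First I would form the matrix $A=(a_{xy})_{(x,y)\in X\times Y}$ defined in~\eqref{eq:def A szem}. A direct computation verifies the identity~\eqref{eq:szem idendity}, namely
\[
\Bigl|\sum_{x\in S,\,y\in T}a_{xy}\Bigr|=|S|\cdot |T|\cdot\left|\tfrac{e(S,T)}{|S|\cdot |T|}-\tfrac{e(X,Y)}{|X|\cdot |Y|}\right|,
\]
for every $S\subseteq X$ and $T\subseteq Y$.

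Next I would feed $A$ into the Grothendieck-based polynomial time rounding procedure described in Section~\ref{sec:rounding}. That procedure outputs subsets $S_0\subseteq X$, $T_0\subseteq Y$ satisfying $\bigl|\sum_{(x,y)\in S_0\times T_0}a_{xy}\bigr|\ge \tfrac{1}{K_G}\|A\|_{cut}$. The algorithm for the lemma then evaluates this sum: if it is at least $\tfrac12\e\delta^2 n^2$, output $(S_0,T_0)$ as a witness of irregularity; otherwise, declare $(X,Y)$ to be $(\e,\delta)$-regular.

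For correctness, suppose $(X,Y)$ is not $(\e,\delta)$-regular. By definition there exist $S^*\subseteq X$ and $T^*\subseteq Y$ with $|S^*|\ge \delta n$, $|T^*|\ge \delta n$ and density deviation strictly greater than $\e$; the identity above then gives $\|A\|_{cut}>\e\delta^2 n^2$, so the rounding guarantee produces
\[
\Bigl|\sum_{(x,y)\in S_0\times T_0}a_{xy}\Bigr|>\tfrac{1}{K_G}\e\delta^2 n^2>\tfrac12\e\delta^2 n^2,
\]
where the second inequality uses $K_G<2$, which follows from~\eqref{eq:K_G current status is}. The contrapositive is precisely what justifies declaring the pair regular when the threshold is not met. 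Conversely, if $(S_0,T_0)$ is output, then combining the identity with the trivial bounds $|S_0|\cdot|T_0|\le n|S_0|$, $|S_0|\cdot|T_0|\le n|T_0|$, and $|S_0|\cdot|T_0|\le n^2$ yields $|S_0|,|T_0|\ge \tfrac12\e\delta^2 n$ and density deviation at least $\tfrac12\e\delta^2$.

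The only nontrivial ingredient is the polynomial time rounding promised in Section~\ref{sec:rounding}; approximating $\|A\|_{cut}$ to within a constant factor would not suffice, since we genuinely need the witnessing pair $(S_0,T_0)$, not merely the numerical value. This is the step one must not use the Grothendieck inequality as a black box, and it is the real content hidden behind the otherwise routine assembly above.
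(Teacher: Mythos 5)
Your proof is correct and takes essentially the same route as the paper: form the matrix $A$ from~\eqref{eq:def A szem}, note~\eqref{eq:szem idendity}, run the Grothendieck-based rounding of Section~\ref{sec:rounding} to get a witnessing pair $(S_0,T_0)$ with $\bigl|\sum_{(x,y)\in S_0\times T_0}a_{xy}\bigr|\ge\frac{1}{K_G}\|A\|_{cut}$, and threshold at $\frac12\e\delta^2 n^2$ using $K_G<2$ from~\eqref{eq:K_G current status is}. You make the decision logic slightly more explicit than the text does (the paper's chain of inequalities following~\eqref{eq:szem idendity} encodes the same algorithm implicitly), and you correctly flag that the rounding procedure, not the value-only estimate, is the essential ingredient.
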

From Lemma~\ref{lem:alg reg} it is quite simple to design a polynomial algorithm that constructs a Szemer\'edi partition with bounded cardinality; compare Lemma~\ref{lem:alg reg} to Corollary 3.3 in~\cite{ADLRY94} and Theorem 1.5 in~\cite{KRT03}. We will not explain this deduction here since it is identical to the argument in~\cite{ADLRY94}. We note that the quantitative bounds in Lemma~\ref{lem:alg reg} improve over the corresponding bounds in~\cite{ADLRY94,KRT03} yielding, say, when $\e=\delta=\eta$, an algorithm with the best known bound on $K$ as a function of $\e$ (this bound is nevertheless still huge, as must be the case due to~\cite{Gow97}; see also~\cite{CF11}). See~\cite{AN06} for a precise statement of these bounds. In addition, the algorithms of~\cite{ADLRY94,KRT03} worked only in the ``dense case", i.e., when $\|A\|_{cut}$, for $A$ as in~\eqref{eq:def A szem}, is of order $n^2$, while the above algorithm does not have this requirement. This observation can be used to design the only known polynomial time algorithm for sparse versions of the Szemer\'edi regularity lemma~\cite{ACHKRS10} (see also~\cite{GS07}). We will not discuss the sparse version of the regularity lemma here, and refer instead to~\cite{Koh97,KR03} for a discussion of this topic. We also refer to~\cite{ACHKRS10} for additional applications of the Grothendieck inequality in sparse settings.

\subsubsection{Frieze-Kannan matrix decomposition}\label{sec:FK} The cut norm estimation problem was originally raised in the work of Frieze and Kannan~\cite{FK99} which introduced a method to design polynomial time approximation schemes for dense constraint satisfaction problems. The key tool for this purpose is a decomposition theorem for matrices that we now describe.

An $m\times n$ matrix $D=(d_{ij})$ is called a cut matrix if there exist subsets $S\subseteq \{1,\ldots,m\}$ and $T\subseteq \{1,\ldots,n\}$, and $d\in \R$ such that for all $(i,j)\in \{1,\ldots,m\}\times \{1,\ldots,n\}$ we have,
\begin{equation}\label{eq:def cut matrix}
d_{ij}=\left\{\begin{array}{ll}d &\mathrm{if}\ (i,j)\in S\times T,\\
0 &\mathrm{if}\ (i,j)\notin S\times T.\end{array}\right.
\end{equation}
Denote the matrix $D$ defined in~\eqref{eq:def cut matrix} by
$CUT(S,T,d)$. In~\cite{FK99} it is proved that for every $\e>0$
there exists an integer $s=O(1/\e^2)$ such that  for any $m\times n$
matrix $A=(a_{ij})$ with entries bounded in absolute value by $1$,
there are cut matrices $D_1,\ldots,D_s$ satisfying
\begin{equation}\label{eq:FK thm}
\left\|A-\sum_{k=1}^s D_k\right\|_{cut}\le \e mn.
\end{equation}
Moreover, these cut matrices $D_1,\ldots,D_s$ can be found in time $C(\e)(mn)^{O(1)}$. We shall now explain how this is done using the cut norm approximation algorithm of Section~\ref{sec:cut}.

The argument is iterative. Set $A_0=A$, and assuming that the cut matrices $D_1,\ldots,D_r$ have already been defined write $A_r=(a_{ij}(r))=A- \sum_{k=1}^r D_k$. We are done if $\left\|A_r\right\|_{cut}\le \e mn$, so we may assume that $\left\|A_r\right\|_{cut}> \e mn$. By the cut norm approximation algorithm we can find in polynomial time $S\subseteq \{1,\ldots, m\}$ and $T\subseteq \{1,\ldots, n\}$ satisfying
\begin{equation}\label{eq:use cut for FK}
\left|\sum_{\substack{i\in S\\j\in T}}a_{ij}(r)\right|\ge c \|A_r\|_{cut}\ge c\e mn,
\end{equation}
where $c>0$ is a universal constant. Set
$$
d=\frac{1}{|S|\cdot |T|} \sum_{\substack{i\in S\\j\in T}}a_{ij}(r).
$$
Define $D_{r+1}=CUT(S,T,d)$ and $A_{r+1}=(a_{ij}(r+1))=A_r-D_{r+1}$. Then by expanding the squares we have,
$$
\sum_{i=1}^m\sum_{j=1}^n a_{ij}(r+1)^2=\sum_{i=1}^m\sum_{j=1}^n a_{ij}(r)^2-\frac{1}{|S|\cdot |T|} \left(\sum_{\substack{i\in S\\j\in T}}a_{ij}(r)\right)^2\stackrel{\eqref{eq:use cut for FK}}{\le} \sum_{i=1}^m\sum_{j=1}^n a_{ij}(r)^2-c^2\e^2 mn.
$$
It follows inductively that if we can carry out this procedure $r$ times then
$$0\le\sum_{i=1}^m\sum_{j=1}^n a_{ij}(r)^2\le \sum_{i=1}^m\sum_{j=1}^n a_{ij}^2-rc^2\e^2mn\le mn-rc^2\e^2mn,
$$
where we used the assumption that $|a_{ij}|\le 1$. Therefore the
above iteration must terminate after $\lceil 1/(c^2\e^2)\rceil$
steps, yielding~\eqref{eq:FK thm}. We note that the bound
$s=O(1/\e^2)$ in~\eqref{eq:FK thm} cannot be improved~\cite{AFKK03};
see also~\cite{LS07,CF11} for related lower bounds.

The key step in the above algorithm was finding sets $S,T$ as in~\eqref{eq:use cut for FK}. In~\cite{FK99} an algorithm was designed that, given an $m\times n$ matrix $A=(a_{ij})$ and $\e>0$ as input, produces in time $2^{1/\e^{O(1)}}(mn)^{O(1)}$ subsets $S\subseteq \{1,\ldots,m\}$ and $T\subseteq \{1,\ldots,n\}$ satisfying
\begin{equation}\label{eq:additive}
\left|\sum_{\substack{i\in S\\j\in T}}a_{ij}\right|\ge \|A\|_{cut}-\e mn.
\end{equation}
The additive approximation guarantee in~\eqref{eq:additive}
implies~\eqref{eq:use cut for FK} only if $\|A\|_{cut}\ge
\e(c+1)mn$, and similarly the running time is not polynomial if,
say, $\e=n^{-\Omega(1)}$. Thus the Kannan-Frieze method is relevant
only to ``dense" instances, while the cut norm algorithm based on
the Grothendieck inequality applies equally well for all values of
$\|A\|_{cut}$. This fact, combined with more work (and, necessarily,
additional assumptions on the matrix $A$), was used in~\cite{CCF10}
to obtain a sparse version of~\eqref{eq:FK thm}: with $\e mn$ in the
right hand side of~\eqref{eq:FK thm} replaced by $\e\|A\|_{cut}$ and
$s=O(1/\e^2)$ (importantly, here $s$ is independent of $m,n$).

We have indicated above how the cut norm approximation problem is relevant to Kannan-Frieze matrix decompositions, but we did not indicate the uses of such decompositions since this is beyond the scope of the current survey. We refer to~\cite{FK99,AFKK03,BW09,CCF10} for a variety of applications of this methodology to combinatorial optimization problems.

\subsubsection{Maximum acyclic subgraph}\label{sec:acyclic} In the maximum acyclic subgraph problem we are given as input an $n$-vertex directed graph $G=(\{1,\ldots,n\},E)$.
Thus $E$ consists of a family of {\em ordered} pairs of distinct
elements in $\{1,\ldots,n\}$. We are interested in the maximum of
\begin{equation*}\label{eq:def acyclic}
\big|\{(i,j)\in \{1,\ldots,n\}^2:\ \sigma(i)<\sigma(j)\}\cap E\big|-
\big|\{(i,j)\in \{1,\ldots,n\}^2:\ \sigma(i)>\sigma(j)\}\cap E\big|
\end{equation*}
over all possible permutations $\sigma \in S_n$ ($S_n$ denotes the
group of permutations of $\{1,\ldots,n\}$). In words, the quantity
of interest is the maximum over all orderings of the vertices  of
the number of edges going ``forward" minus the number of edges going
``backward". The best known approximation algorithm for this problem
was discovered in~\cite{CMM07} as an application of the cut norm
approximation algorithm.

It is most natural to explain the algorithm of~\cite{CMM07} for a weighted version of the maximum acyclic subgraph problem. Let $W:\{1,\ldots,n\}\times \{1,\ldots,n\}\to \R$ be skew symmetric, i.e., $W(u,v)=-W(v,u)$ for all $u,v\in \{1,\ldots,n\}$.  For $\sigma\in S_n$ define
$$
W(\sigma)=\sum_{\substack{u,v\in \{1,\ldots,n\}\\u<v}}W(\sigma(u),\sigma(v)).
$$
Thus $W(\sigma)$ is the sum of the entries of $W$ that lie above the
diagonal after the rows and columns of $W$ have been permuted
according to the permutation $\sigma$. We are interested in the
quantity $ M_W=\max_{\sigma \in S_n} W(\sigma). $ The case of a
directed graph $G=(\{1,\ldots,n\},E)$ described above corresponds to
the matrix $W(u,v)=\mathbf{1}_{\{(u,v)\in
E\}}-\mathbf{1}_{\{(v,u)\in E\}}$.

\begin{theorem}[\cite{CMM07}]\label{thm:CMM}
The exists a polynomial time algorithm that takes as input an
$n\times n$ skew symmetric $W:\{1,\ldots,n\}\times \{1,\ldots,n\}\to
\R$ and outputs a permutation $\sigma\in S_n$
satisfying\footnote{Here, and in what follows, the relations
$\gtrsim,\lesssim$ indicate the corresponding inequalities up to an
absolute factor. The
relation $\asymp$ stands for $\gtrsim\wedge\lesssim$.} 
$$
W(\sigma)\gtrsim \frac{M_W}{\log n}.
$$
\end{theorem}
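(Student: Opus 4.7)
The plan is to design a recursive divide-and-conquer algorithm whose core subroutine is the cut-norm approximation from Section~\ref{sec:cut}, and ultimately the Grothendieck inequality.  Given a sub-problem on vertex set $V'\subseteq\{1,\ldots,n\}$, I would apply that cut-norm algorithm to $W$ restricted to $V'\times V'$ to produce, in polynomial time, a partition $V' = S\sqcup T$ satisfying
\[
\mathrm{cut}_W(S,T)\;:=\;\sum_{i\in S,\,j\in T} W(i,j)\;\ge\;\frac{1}{K_G}\,\max_{S'\sqcup T'=V'} \mathrm{cut}_W(S',T').
\]
Skew-symmetry of $W$ ensures this maximum is non-negative, so I orient the partition to have $\mathrm{cut}_W(S,T)\ge 0$, place $S$ before $T$, recurse on each side, and concatenate the resulting orderings.

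The key observation is the telescoping identity $W(\sigma) = \sum_\nu \mathrm{cut}_W(S_\nu, T_\nu)$, summed over all internal nodes $\nu$ of the recursion tree with their partitions $(S_\nu,T_\nu)$: every ordered pair of distinct vertices is separated at exactly one node, where it contributes its $W$-value to that node's cut.  To compare with $M_W$, I would use the dyadic decomposition of the optimal $\sigma^*\in S_n$. Halving positions $\{1,\ldots,n\}$ recursively to depth $\lceil\log_2 n\rceil$ gives
\[
M_W\;=\;\sum_{\ell=1}^{\lceil\log_2 n\rceil}\,\sum_{B} \mathrm{cut}_W\bigl(\sigma^*(L_B),\sigma^*(R_B)\bigr),
\]
where the inner sum is over the $2^{\ell-1}$ consecutive position blocks $B$ of size $n/2^{\ell-1}$ at depth $\ell-1$ and $L_B, R_B$ are their two halves; indeed each pair of positions $u<v$ is split at a unique depth, and there it contributes $W(\sigma^*(u),\sigma^*(v))$ to the corresponding cut.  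Pigeonhole then identifies a level $\ell^*$ contributing at least $M_W/\lceil\log_2 n\rceil$ to this sum.

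The hard part will be matching the algorithm's cumulative cut against this level-$\ell^*$ signal, since the algorithm's recursion tree is driven by cut-norm approximation on sub-problems and is not guaranteed to align with the dyadic tree of the unknown $\sigma^*$.  The approach of~\cite{CMM07} handles this with a careful inductive analysis: at each of its recursion nodes the algorithm captures the best cut of that node's vertex set up to the factor $K_G$, and a charging argument against the level-$\ell^*$ slice of $\sigma^*$'s decomposition shows $\sum_\nu\mathrm{cut}_W(S_\nu,T_\nu)\gtrsim M_W/(K_G\log n)$.  Combined with the telescoping identity, this yields the claimed bound $W(\sigma)\gtrsim M_W/\log n$.
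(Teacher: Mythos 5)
Your algorithm (recursive divide-and-conquer, cutting via the cut-norm SDP at each node) is plausibly correct, and the telescoping identity $W(\sigma)=\sum_\nu \mathrm{cut}_W(S_\nu,T_\nu)$ is a nice way to account for its value. But the proof has a genuine gap, and it sits exactly where you flag ``the hard part.'' The dyadic decomposition of the optimal $\sigma^*$ produces a level $\ell^*$ whose within-block cuts sum to $\geq M_W/\lceil\log_2 n\rceil$, but those cuts live inside the $2^{\ell^*-1}$ dyadic blocks of $\sigma^*$, whereas the algorithm's recursion tree is driven by its own cut-norm maximizations and is in general completely unaligned with that dyadic tree. The local guarantee at a node $\nu$ only compares against the best cut \emph{of $V_\nu$}, and there is no obvious way to charge the within-block contributions at level $\ell^*$ to the algorithm's nodes: a single early cut by the algorithm can split each $\sigma^*$-block arbitrarily, and the cross-block terms of $W$ (which are uncontrolled in sign) prevent you from lower-bounding the cut norm of $V_\nu$ by the within-block sums. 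You also attribute a ``careful inductive analysis''/charging argument to~\cite{CMM07}; this is not what that paper (or this survey) does. There is a secondary, smaller issue: the cut-norm subroutine returns subsets $S,T$ that need not partition $V'$, so a binary recursion ``on each side'' doesn't account for the residual set $R=V'\setminus(S\cup T)$.

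The paper's argument is structurally different and avoids the alignment problem entirely. It applies the cut-norm rounding \emph{once}, makes $S$ and $T$ disjoint by exploiting skew-symmetry, and builds two explicit permutations $\sigma^1,\sigma^2$ (one listing $S,T,R$ in that order increasingly, the other $R,S,T$ decreasingly) for which $W(\sigma^1)+W(\sigma^2)=2\sum_{u\in S,v\in T}W(u,v)$; one of them therefore has value $\geq c\|W\|_{cut}$. The entire $\log n$ loss is then concentrated in the purely analytic inequality $\|W\|_{cut}\gtrsim M_W/\log n$, which is proved via the rank-two Grothendieck inequality for vectors $(\sin\alpha,\cos\alpha)$, the orthogonality relation $\frac{2}{n}\sum_{k,\ell}\sin(\pi(v-u)k/n)\sin(\pi k\ell/n)=1$, and the estimate $\sum_{k=1}^{n-1}\bigl|\sum_{\ell=1}^{n-1}\sin(\pi k\ell/n)\bigr|\lesssim n\log n$. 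Note that this same inequality would make your root step alone already capture $\gtrsim M_W/\log n$, rendering the recursion and the level-$\ell^*$ charging unnecessary; in other words, the place where the $\log n$ actually has to be earned is the Fourier-analytic lower bound on $\|W\|_{cut}$, not a combinatorial accounting over the recursion tree, and your proposal does not supply a substitute for it.
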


\begin{proof} The proof below is a slight variant of the reasoning of~\cite{CMM07}. By the cut norm approximation algorithm one can find in polynomial time two  subsets $S,T\subseteq \{1,\ldots,n\}$ satisfying
\begin{equation}\label{eq:no abs}
\sum_{\substack{u\in S\\v\in T}} W(u,v)\ge c \|W\|_{cut},
\end{equation}
where $c\in (0,\infty)$ is a universal constant. Note that we do not
need to take the absolute value of the left hand side
of~\eqref{eq:no abs} because $W$ is skew symmetric. Observe also
that since $W$ is skew symmetric we have $\sum_{u,v\in S\cap T}
W(u,v)=0$ and therefore
$$
\sum_{\substack{u\in S\\v\in T}} W(u,v)=\sum_{\substack{u\in
S\setminus T\\v\in T\setminus S}} W(u,v)+\sum_{\substack{u\in
S\setminus T \\v\in S\cap T}} W(u,v)+\sum_{\substack{u\in S\cap
T\\v\in T\setminus S}} W(u,v).
$$
By replacing the pair of subsets $(S,T)$ by one of $\{(S\setminus
T,T\setminus S),(S\setminus T,S\cap T),(S\cap T,T\setminus S)\}$,
and replacing the constant $c$ is~\eqref{eq:no abs} by $c/3$, we may
 assume without loss of generality that~\eqref{eq:no abs} holds
with $S$ and $T$ disjoint. Denote $R=\{1,\ldots,n\}\setminus (S\cup
T)$ and write $S=\{s_1,\ldots,s_{|S|}\}$, $T=\{t_1,\ldots,t_{|T|}\}$
and $R=\{r_1,\ldots,r_{|R|}\}$, where $s_1<\cdots<s_{|S|}$,
$t_1<\cdots<t_{|T|}$ and $r_1<\cdots<r_{|R|}$.

Define two permutations $\sigma^1,\sigma^2 \in S_n$ as follows.
$$
\sigma^1(u)=\left\{\begin{array}{ll}s_u&\mathrm{if}\ u\in \{1,\ldots,|S|\},\\
t_{u-|S|}&\mathrm{if}\ u\in \{|S|+1,\ldots,|S|+|T|\},\\
r_{u-|S|-|T|}&\mathrm{if}\ u\in
\{|S|+|T|+1,\ldots,n\},\end{array}\right.
$$
and
$$
\sigma^2(u)=\left\{\begin{array}{ll}r_{|R|-u+1}&\mathrm{if}\ u\in \{1,\ldots,|R|\},\\
s_{|R|+|S|-u+1}&\mathrm{if}\ u\in \{|R|+1,\ldots,|R|+|S|\},\\
t_{n-u+1}&\mathrm{if}\ u\in \{|R|+|S|+1,\ldots,n\}.\end{array}\right.
$$
In words, $\sigma^1$ orders $\{1,\ldots,n\}$ by starting with the
elements of $S$ in increasing order, then the elements of $T$ in
increasing order, and finally the elements of $R$ in increasing
order. At the same time, $\sigma^2$ orders $\{1,\ldots,n\}$ by
starting with the elements of $R$ in decreasing order, then the
elements of $S$ in decreasing order, and finally the elements of $T$
in decreasing order. The quantity $W(\sigma^1)+W(\sigma^2)$ consists
of a sum of terms of the form $W(u,v)$ for $u,v\in \{1,\ldots,n\}$,
where if $(u,v)\in (S\times S)\cup (T\times T)\cup (R\times
\{1,\ldots,n\})$ then both $W(u,v)$ and $W(v,u)$ appear exactly once
in this sum, and if $(u,v)\in S\times T$ then $W(u,v)$ appears twice
in this sum and $W(v,u)$ does not appear in this sum at all.
Therefore, using the fact that $W$ is skew symmetric we have the
following identity.
$$
W(\sigma^1)+W(\sigma^2)=2\sum_{\substack{u\in S\\v\in T}} W(u,v).
$$
It follows that for some $\ell\in \{1,2\}$ we have
$$
M(\sigma^\ell)\ge \sum_{\substack{u\in S\\v\in T}} W(u,v)\stackrel{\eqref{eq:no abs}}{\ge} c \|W\|_{cut}.
$$
The output of the algorithm will be the permutation $\sigma^\ell$, so it suffices to prove that
\begin{equation}\label{eq:CMM ineq}
\|W\|_{cut}\gtrsim  \frac{M_W}{\log n}.
\end{equation}
 We will prove
below that
\begin{equation}\label{eq:no permutation}
\|W\|_{cut}\gtrsim  \frac{1}{\log n}\sum_{\substack{u,v\in \{1,\ldots,n\}\\u<v}}W(u,v).
\end{equation}
Inequality~\eqref{eq:CMM ineq} follows by applying~\eqref{eq:no permutation} to  $W'(u,v)=W(\sigma(u),\sigma(v))$ for every $\sigma\in S_n$.

To prove~\eqref{eq:no permutation} first note that $\|W\|_{cut}\ge \frac14 \|W\|_{\infty\to 1}$; we have already proved this inequality as a consequence of the simple identity~\eqref{eq:4 bound}.   Moreover, we have
\begin{equation}\label{eq:imaginary}
\|W\|_{\infty\to 1}\gtrsim\max\left\{\sum_{u=1}^n\sum_{v=1}^n W(u,v) \sin(\alpha_u-\beta_v): \{\alpha_u\}_{u=1}^n,\{\beta_v\}_{v=1}^n\subseteq \R\right\}.
\end{equation}
Inequality~\eqref{eq:imaginary} is a special case of~\eqref{eq:def
gro} with the choice of vectors $x_u=(\sin \alpha_u,\cos
\alpha_u)\in \R^2$ and $y_v=(\cos\beta_v,-\sin\beta_v)\in \R^2$. We
note that this two-dimensional version of the Grothendieck
inequality is trivial with the constant in the right hand side
of~\eqref{eq:imaginary} being $\frac12$, and it is shown
in~\cite{Kri79} that the best constant in the right hand side
of~\eqref{eq:imaginary} is actually $\frac{1}{\sqrt{2}}$.



For every $\theta_1,\ldots,\theta_{n}\in \R$, an application of~\eqref{eq:imaginary} when $\alpha_u=\beta_u=\theta_u$ and $\alpha_u=\beta_u=-\theta_u$ yields the inequality
\begin{equation}\label{eq:sym}
\|W\|_{cut}\gtrsim \left|\sum_{u=1}^n\sum_{v=1}^n
W(u,v)\sin\left(\theta_u-\theta_v\right)\right|=
2\left|\sum_{\substack{u,v\in \{1,\ldots,n\}\\u<v}}
W(u,v)\sin\left(\theta_u-\theta_v\right)\right|,
\end{equation}
where for the equality in~\eqref{eq:sym} we used the fact that $W$
is skew symmetric. Consequently, for every $k\in\N$ we have
\begin{equation}\label{eq:use skew}
\|W\|_{cut}\gtrsim \left|\sum_{\substack{u,v\in
\{1,\ldots,n\}\\u<v}}W(u,v)\sin\left(\frac{\pi(v-u)k}{n}\right)\right|.
\end{equation}

By the standard orthogonality relation for the sine function, for every $u,v\in \{1,\ldots,n\}$ such that $u<v$ we have
\begin{equation}\label{eq:orthogonality}
\frac{2}{n}\sum_{k=1}^{n-1}\sum_{\ell=1}^{n-1} \sin\left(\frac{\pi(v-u)k}{n}\right)\sin\left(\frac{\pi k \ell}{n}\right)=1.
\end{equation}
Readers who are unfamiliar with~\eqref{eq:orthogonality} are
referred to its derivation in the appendix of~\cite{CMM07}; it can
be proved by substituting
$\sin\left(\pi(v-u)k/n\right)=(e^{i\pi(v-u)k/n}-e^{-i\pi(v-u)k/n})/(2i)$
and $\sin(\pi k\ell/n)=(e^{i\pi k\ell/n}-e^{-i\pi k\ell/n})/(2i)$
into the left hand side of~\eqref{eq:orthogonality} and computing
the resulting geometric sums explicitly.  Now,
\begin{eqnarray*}
\sum_{\substack{u,v\in \{1,\ldots,n\}\\u<v}}W(u,v)&\stackrel{\eqref{eq:orthogonality}}{=}&\frac{2}{n}\sum_{\substack{u,v\in \{1,\ldots,n\}\\u<v}}W(u,v)\sum_{k=1}^{n-1}\sum_{\ell=1}^{n-1} \sin\left(\frac{\pi(v-u)k}{n}\right)\sin\left(\frac{\pi k \ell}{n}\right)\\
&\le& \frac{2}{n}\sum_{k=1}^{n-1}\left|\sum_{\ell=1}^{n-1}\sin\left(\frac{\pi k \ell}{n}\right)\right|\cdot \left|\sum_{\substack{u,v\in \{1,\ldots,n\}\\u<v}}W(u,v)\sin\left(\frac{\pi(v-u)k}{n}\right)\right|\\&\stackrel{\eqref{eq:use skew}}{\lesssim}& \frac{\sum_{k=1}^{n-1}\left|\sum_{\ell=1}^{n-1}\sin\left(\frac{\pi k \ell}{n}\right)\right|}{n}\|W\|_{cut}.
\end{eqnarray*}
Hence, the desired inequality~\eqref{eq:no permutation} will follow
from $\sum_{k=1}^{n-1}\left|\sum_{\ell=1}^{n-1}\sin\left(\pi k
\ell/n\right)\right|\lesssim n\log n$. To establish this estimate
observe that by writing $\sin(\pi k\ell/n)=(e^{i\pi
k\ell/n}-e^{-i\pi k\ell/n})/(2i)$ and computing geometric sums
explicitly, one sees that $\sum_{\ell=1}^{n-1}\sin\left(\pi k
\ell/n\right)=0$ if $k$ is even and
$\sum_{\ell=1}^{n-1}\sin\left(\pi k \ell/n\right)=\cot(\pi k/(2n))$
if $k$ is odd (see the appendix of~\cite{CMM07} for the details of
this computation). Hence, since $\cot(\theta)< 1/\theta$ for every
$\theta\in (0,\pi/2)$, we have
\begin{equation*}
\sum_{k=1}^{n-1}\left|\sum_{\ell=1}^{n-1}\sin\left(\frac{\pi k \ell}{n}\right)\right|=\sum_{j=0}^{\left\lfloor \frac{n}{2}-1\right\rfloor}\cot\left(\frac{\pi (2j+1)}{2n}\right)\le \frac{2n}{\pi}\sum_{j=0}^{\left\lfloor \frac{n}{2}-1\right\rfloor} \frac{1}{2j+1}\lesssim n\log n\qedhere
\end{equation*}
\end{proof}

\subsubsection{Linear equations modulo 2}\label{sec:linear} Consider a system $\mathcal E$ of $N$ linear equations
modulo $2$ in $n$ Boolean variables $z_1,\ldots,z_n$ such that in
each equation appear only three distinct variables.
Let
$\mathrm{MAXSAT}(\mathcal E)$ be the maximum number of equations in
$\mathcal E$ that can be satisfied simultaneously. A random $\{0,1\}$
assignment of these variables satisfies in expectation $N/2$
equations, so it is natural to ask for
a polynomial time approximation algorithm to the quantity $\mathrm{MAXSAT}(\mathcal
E)-N/2$. We  describe below the best known~\cite{KN08} approximation algorithm for this problem, which uses the Grothendieck inequality in a crucial way. The approximation guarantee thus obtained is $O(\sqrt{n/\log n})$. While this allows for a large error, it is shown in~\cite{HV04} that for every $\e\in (0,1)$ if there were a polynomial time algorithm that approximates $\mathrm{MAXSAT}(\mathcal
E)-N/2$ to within a factor of $2^{(\log n)^{1-\e}}$ in time
$2^{(\log n)^{O(1)}}$ then there would be an algorithm for $3$-colorability that runs in time $2^{(\log n)^{O(1)}}$, a conclusion which is widely believed to be impossible.

Let $\mathcal E$ be a system of linear equations as described above. Write
$a_{ijk}=1$ if the equation $z_i+z_j+z_k=0$ is in the
system $\mathcal E$. Similarly write $a_{ijk}=-1$ if the
equation $z_i+z_j+z_k=1$ is in $\mathcal E$. Finally, write
$a_{ijk}=0$ if no equation in $\mathcal E$ corresponds
to $z_i+z_j+z_k$. Assume that the assignment $(z_1,\ldots,z_n)\in \{0,1\}^n$
satisfies $m$ of the equations in $\mathcal E$. Then
$$
\sum_{i=1}^n\sum_{j=1}^n\sum_{k=1}^n a_{ijk}
(-1)^{z_i+z_j+z_k}=m-(N-m)=2\left(m-\frac{N}{2}\right).
$$
 It follows
that
\begin{equation}\label{eq:pass to 3 tensor}
\max\left\{\sum_{i=1}^n\sum_{j=1}^n\sum_{k=1}^na_{ijk}\e_i\e_j\e_k:\ \{\e_i\}_{i=1}^n\subseteq \{-1,1\}\right\}=2\left(\mathrm{MAXSAT}(\mathcal
E)-\frac{N}{2}\right)\eqdef M.
\end{equation}

 We will now present a randomized polynomial algorithm that outputs a number $\alpha\in \R$ which satisfies with probability at least $\frac12$,
\begin{equation}\label{eq:new alpha goal}
\frac{1}{20K_G}\sqrt{\frac{\log n}{n}}M\le \alpha \le
M.
\end{equation}
Fix $m\in \N$ that will be determined later. Choose $\e^1,\ldots,\e^m\in \{-1,1\}^n$ independently and uniformly at random and consider the following random variable.
\begin{equation}\label{eq:random SDP}
\alpha=\frac{1}{10K_G}\max_{\ell\in \{1,\ldots,m\}}\max\left\{\sum_{i=1}^n\sum_{j=1}^n\sum_{k=1}^n a_{ijk}\e^\ell_i \langle y_j,z_k\rangle:\ \{y_j\}_{j=1}^n,\{z_k\}_{k=1}^n\subseteq S^{2n-1}\right\}.
\end{equation}
By the Grothendieck inequality we know that
\begin{equation}\label{eq:decouple}
\alpha\le \frac{1}{10}\max\left\{\sum_{i=1}^n\sum_{j=1}^n\sum_{k=1}^n a_{ijk} \e_i\delta_j\zeta_k:\ \{\e_i\}_{i=1}^n,\{\delta_j\}_{j=1}^n,\{\zeta_k\}_{k=1}^n\subseteq \{-1,1\}\right\}\le M.
\end{equation}
The final step in~\eqref{eq:decouple} follows from an elementary decoupling argument; see~\cite[Lem.~2.1]{KN08}.

 We claim that
\begin{equation}\label{eq:prob}
\Pr\left[\alpha \ge \frac{1}{20K_G} \sqrt{\frac{\log n}{n}}M \right]\ge 1-e^{-cm/\sqrt[4]{n}}.
\end{equation}
Once~\eqref{eq:prob} is established, it would follow that for $m\asymp \sqrt[4]{n}$ we have $\alpha\ge \frac{1}{20K_G}\sqrt{\frac{\log n}{n}}M$ with probability at least $\frac12$. This combined with~\eqref{eq:decouple} would complete the proof of~\eqref{eq:new alpha goal} since $\alpha$ as defined in~\eqref{eq:random SDP} can be computed in polynomial time, being the maximum of $O\left(\sqrt[4]{n}\right)$ semidefinite programs.

To check~\eqref{eq:prob} let $\|\cdot\|$ be the norm on $\R^n$ defined for every $x=(x_1,\ldots,x_n)\in \R^n$ by
$$
\|x\|=\max\left\{\sum_{i=1}^n\sum_{j=1}^n\sum_{k=1}^n a_{ijk}x_i \langle y_j,z_k\rangle:\ \{y_j\}_{j=1}^n,\{z_k\}_{k=1}^n\subseteq S^{2n-1}\right\}.
$$
Define $K=\{x\in \R^n:\ \|x\|\le 1\}$ and let $K^\circ=\{y\in \R^n:\ \sup_{x\in K}\langle x,y\rangle\le 1\}$ be the polar of $K$. Then $\max\{\|y\|_1:\ y\in K^\circ\}=\max\{\|x\|:\ \|x\|_\infty\le 1\}\ge M$, where the first equality is straightforward duality and the final inequality is a consequence of the definition of $\|\cdot\|$ and $M$. It follows that there exists $y\in K^\circ$ with  $\|y\|_1\ge M$. Hence,
\begin{multline*}
\Pr\left[\alpha \ge \frac{1}{20K_G}\sqrt{\frac{\log n}{n}}M \right]\stackrel{\eqref{eq:random SDP}}{=}1-\prod_{\ell=1}^m\Pr\left[\|\e^\ell\|<\frac12\sqrt{\frac{\log n}{n}}M\right]
\\\ge 1-\left(\Pr\left[\sum_{i=1}^n\e^1_iy_i<\frac12\sqrt{\frac{\log n}{n}}\sum_{i=1}^n|y_i|\right]\right)^m.
\end{multline*}
In order to prove~\eqref{eq:prob} it therefore suffices to prove that if $\e$ is chosen uniformly at random from $\{-1,1\}^n$ and $a\in \R^n$ satisfies $\|a\|_1=1$ then  $\Pr\left[\sum_{i=1}^n\e_ia_i\ge\sqrt{\log n/(4n)}\right]\ge 1-c/\sqrt[4]{n}$, where $c\in (0,\infty)$ is a universal constant. This probabilistic estimate for i.i.d. Bernoulli sums can be proved directly; see~\cite[Lem.~3.2]{KN08}.

\subsection{Rounding}\label{sec:rounding}

Let $A=(a_{ij})$ be an $m\times n$ matrix. In Section~\ref{sec:cut}
we described a polynomial time  algorithm for approximating
$\|A\|_{cut}$ and $\|A\|_{\infty\to 1}$. For applications it is also
important to find in polynomial time signs
$\e_1,\ldots,\e_m,\delta_1,\ldots,\delta_n\in \{-1,1\}$  for which
$\sum_{i=1}^m\sum_{j=1}^n a_{ij}\e_i\delta_j$ is at least a constant
multiple of $\|A\|_{\infty\to 1}$. This amounts to a ``rounding
problem": we need to find a procedure that, given vectors
$x_1,\ldots,x_m,y_1,\ldots,y_n\in S^{m+n-1}$, produces signs
$\e_1,\ldots,\e_m,\delta_1,\ldots,\delta_n\in \{-1,1\}$ whose
existence is ensured by the Grothendieck inequality, i.e.,
$\sum_{i=1}^m\sum_{j=1}^n a_{ij}\e_i\delta_j$ is at least a constant
multiple of $\sum_{i=1}^m\sum_{j=1}^n a_{ij}\langle x_i,y_j\rangle$.
For this purpose one needs to examine proofs of the Grothendieck
inequality, as done in~\cite{AN06}. We will now describe the
rounding procedure that gives the best known approximation
guarantee. This procedure yields a randomized algorithm that
produces the desired signs; it is also possible to obtain a
deterministic algorithm, as explained in~\cite{AN06}.

The argument below is based on a clever two-step rounding method due
to Krivine~\cite{Krivine77}. Fix $k\in \N$ and assume that we are
given two centrally symmetric measurable partitions of $\R^k$, or
equivalently two odd measurable functions $f,g:\R^k\to \{-1,1\}$.
Let $G_1,G_2\in \R^k$ be independent random vectors that are
distributed according to the standard Gaussian measure on $\R^k$,
i.e., the measure with density $x\mapsto
e^{-\|x\|_2^2/2}/(2\pi)^{k/2}$. For $t\in (-1,1)$ define
\begin{multline}\label{eq:h in intro}
H_{f,g}(t)\eqdef \E\left[f\left(\frac{1}{\sqrt{2}}G_1\right)g\left(\frac{t}{\sqrt{2}}G_1+\frac{\sqrt{1-t^2}}{\sqrt{2}}G_2\right)\right]\\
= \frac{1}{\pi^k(1-t^2)^{k/2}}\int_{\R^k}\int_{\R^k} f(x)g(y)\exp\left(\frac{-\|x\|_2^2-\|y\|_2^2+2t\langle x,y\rangle}{1-t^2}\right)dxdy.
\end{multline}
Then $H_{f,g}$ extends to an analytic function on the strip $\{z\in \mathbb C:\ \Re(z)\in (-1,1)\}$. The pair of functions  $\{f,g\}$ is called a Krivine rounding scheme if $H_{f,g}$ is invertible on a neighborhood of the origin, and if we consider the Taylor expansion $H_{f,g}^{-1}(z)=\sum_{j=0}^\infty a_{2j+1} z^{2j+1}$ then there exists $c=c(f,g)\in (0,\infty)$ satisfying $\sum_{j=0}^\infty |a_{2j+1}|c^{2j+1}=1$.

For $(f,g)$ as above and unit vectors $\{x_i\}_{i=1}^m,\{y_j\}_{j=1}^n\subseteq S^{m+n-1}$, one can find new unit vectors $\{u_i\}_{i=1}^m,\{v_j\}_{j=1}^n\subseteq S^{m+n-1}$ satisfying the identities
\begin{equation}\label{eq:use gen groth iden}
\forall (i,j)\in \{1,\ldots,m\}\times \{1,\ldots,n\},\quad \langle u_i,v_j\rangle = H_{f,g}^{-1}(c(f,g)\langle x_i,y_j\rangle).
\end{equation}
We refer to~\cite{BMMN11} for the proof that $\{u_i\}_{i=1}^m,\{v_j\}_{j=1}^n$ exist. This existence proof is not via an efficient algorithm, but as explained in~\cite{AN06}, once we know that they exist the new vectors can be computed efficiently provided $H_{f,g}^{-1}$ can be computed efficiently; this simply amounts to computing a Cholesky decomposition or, alternatively, solving a semidefinite program corresponding to~\eqref{eq:use gen groth iden}. This completes the first (preprocessing) step of a generalized Krivine rounding procedure. The next step is to apply a random projection to the new vectors thus obtained, as in Grothendieck's original proof~\cite{Gro53} or the Goemans-Williamson algorithm~\cite{GW95}.

Let $G: \R^{m+n}\to \R^k$ be a random $k\times (m+n)$ matrix whose entries are i.i.d.
standard Gaussian random variables. Define random signs $\{\e_i\}_{i=1}^m,\{\delta_j\}_{j=1}^n\subseteq \{-1,1\}$ by
\begin{equation}\label{eq:step 2}
\forall (i,j)\in \{1,\ldots,m\}\times \{1,\ldots,n\},\quad \e_i\eqdef f\left(\frac{1}{\sqrt{2}}Gu_i\right)\quad\mathrm{and}\quad \delta_j\eqdef g\left(\frac{1}{\sqrt{2}}Gv_j\right).
\end{equation}
Now,
\begin{equation}\label{eq:randomized rounding}
\E\left[\sum_{i=1}^m\sum_{j=1}^n a_{ij}\e_i\delta_j\right]\stackrel{(*)}{=} \E\left[\sum_{i=1}^m\sum_{j=1}^n a_{ij}
H_{f,g}\left(\langle u_i,v_j\rangle\right)\right]\stackrel{\eqref{eq:use gen groth iden}}{=}c(f,g)\sum_{i=1}^m\sum_{j=1}^n a_{ij}\langle x_i,y_j\rangle,
\end{equation}
where ($*$) follows by rotation invariance from~\eqref{eq:step 2}
and~\eqref{eq:h in intro}. The identity~\eqref{eq:randomized
rounding} yields the desired polynomial time randomized  rounding
algorithm, provided one can bound $c(f,g)$ from below. It also gives
a systematic way to bound the Grothendieck constant from above: for
every Krivine rounding scheme $f,g:\R^k\to \{-1,1\}$ we have $K_G\le
1/c(f,g)$. Krivine used this reasoning to obtain the bound $K_G\le
\pi/\left(2\log\left(1+\sqrt{2}\right)\right)$ by considering the
case $k=1$ and $f_0(x)=g_0(x)=\sign(x)$. One checks that
$\{f_0,g_0\}$ is a Krivine rounding scheme with
$H_{f_0,g_0}(t)=\frac{2}{\pi}\arcsin(t)$ (Grothendieck's identity)
and $c(f_0,g_0)=\frac{2}{\pi}\log\left(1+\sqrt{2}\right)$.

Since the goal of the above discussion is to round vectors
$\{x_i\}_{i=1}^m,\{y_j\}_{j=1}^n\subseteq  S^{m+n-1}$ to signs
$\{\e_i\}_{i=1}^m,\{\delta_j\}_{j=1}^n\subseteq \{-1,1\}$, it seems
natural to expect that the best possible Krivine rounding scheme
occurs when $k=1$ and $f(x)=g(x)=\sign(x)$. If true, this would
imply that $K_G= \pi/\left(2\log\left(1+\sqrt{2}\right)\right)$; a
long-standing conjecture of Krivine~\cite{Krivine77}. Over the years
additional evidence supporting Krivine's conjecture was discovered,
and a natural analytic conjecture was made in~\cite{Kon00} as a step
towards proving it. We will not discuss these topics here since
in~\cite{BMMN11} it was shown that actually $K_G\le
\pi/\left(2\log\left(1+\sqrt{2}\right)\right)-\e_0$ for some
effective constant $\e_0>0$.

It is known~\cite[Lem.~2.4]{BMMN11} that among all {\em one dimensional} Krivine rounding schemes $f,g:\R\to\{-1,1\}$ we indeed  have $c(f,g)\le \frac{2}{\pi}\log\left(1+\sqrt{2}\right)$, i.e., it does not pay off to take partitions of $\R$ which are more complicated than the half-line partitions. Somewhat unexpectedly, it was shown in~\cite{BMMN11} that a certain two dimensional Krivine rounding scheme $f,g:\R^2\to\{-1,1\}$ satisfies $c(f,g)>\frac{2}{\pi}\log\left(1+\sqrt{2}\right)$. The proof of~\cite{BMMN11} uses a Krivine rounding scheme $f,g:\R^2\to\{-1,1\}$ when $f=g$ corresponds to the partition of $\R^2$ as the sub-graph and super-graph of the polynomial $y=c\left(x^5-10x^3+15x\right)$, where $c>0$ is an appropriately chosen constant. This partition is depicted in Figure~\ref{fig:1}.

As explained in~\cite[Sec.~3]{BMMN11}, there is a natural guess for the ``best" two dimensional Krivine rounding scheme based on a certain numerical computation which we will not discuss here. For this (conjectural) scheme we have $f\neq g$, and the planar partition corresponding to $f$ is depicted in Figure~\ref{fig:2}.  Of course, once Krivine's conjecture has been disproved and the usefulness of higher dimensional rounding schemes has been established, there is no reason to expect that the situation won't improve as we consider $k$-dimensional Krivine rounding schemes for $k\ge 3$. A positive solution to an analytic question presented in~\cite{BMMN11} might even lead to an exact computation of $K_G$; see~\cite[Sec.~3]{BMMN11} for the details.

\begin{figure}[here]
\centering
\begin{tabular}{cc}
\begin{minipage}{217pt}
\frame{\includegraphics[width=217pt]{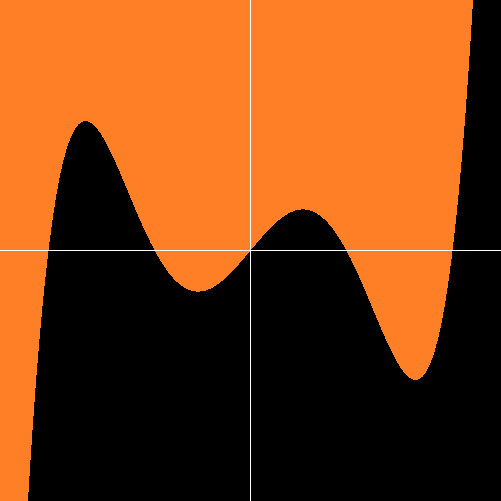}}
\caption{The partition of $\R^2$ used in~\cite{BMMN11} to show that $K_G$ is smaller than Krivine's bound;  the shaded regions are separated by the graph $y=c\left(x^5-10x^3+15x\right)$.}
\label{fig:1}
\end{minipage}
&
\begin{minipage}{217pt}
\frame{\includegraphics[width=217pt]{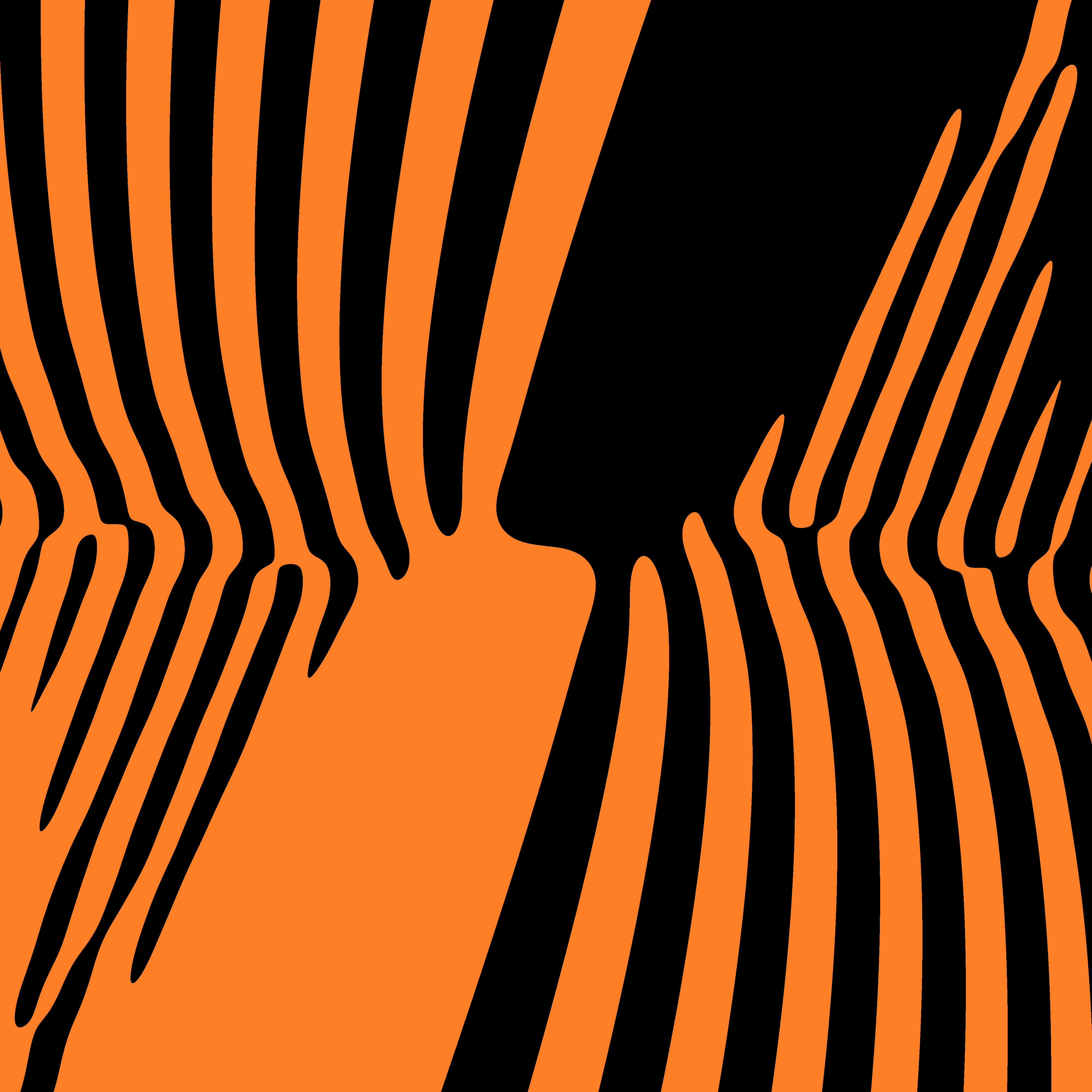}}
\caption{The ``tiger partition" restricted to the square
$[-20,20]^2$. This is the conjectured~\cite{BMMN11} optimal
partition of $\R^2$ for the purpose of Krivine-type rounding.}
\label{fig:2}
\end{minipage}
\end{tabular}
\end{figure}

\section{The Grothendieck constant of a graph}\label{sec:graphs}

Fix $n\in \N$ and let $G=(\{1,\ldots,n\},E)$ be a graph on the vertices $\{1,\ldots,n\}$. We assume throughout that $G$ does not contain any self loops, i.e., $E\subseteq \{S\subseteq \{1,\ldots,n\}:\ |S|=2\}$. Following~\cite{AMMN06}, define the Grothendieck constant of $G$, denoted $K(G)$, to be the smallest constant $K\in (0,\infty)$ such that every $n\times n$ matrix $(a_{ij})$ satisfies
\begin{equation}\label{eq:def gro graph}
\max_{x_1,\ldots,x_n\in S^{n-1}} \sum_{\substack{i,j\in \{1,\ldots,n\}\\\{i,j\}\in E}}a_{ij}\langle x_i,x_j\rangle\le K \max_{\e_1,\ldots,\e_n\in \{-1,1\}} \sum_{\substack{i,j\in \{1,\ldots,n\}\\\{i,j\}\in E}}a_{ij}\e_i\e_j.
\end{equation}
Inequality~\eqref{eq:def gro graph} is an extension of the
Grothendieck inequality since~\eqref{eq:def gro} is the special case
of~\eqref{eq:def gro graph} when $G$ is a bipartite graph. Thus
\begin{equation}\label{eq:bipartite}
K_G=\sup_{n\in \N}\left\{K(G):\ G\ \mathrm{is\ an\ }
n\mathrm{-vertex\ bipartite\ graph}\right\}.
\end{equation}

The opposite extreme of bipartite graphs is $G=K_n$, the $n$-vertex
complete graph. In this case~\eqref{eq:def gro graph} boils down to
the following inequality
\begin{equation}\label{eq:complete}
\max_{x_1,\ldots,x_n\in S^{n-1}} \sum_{\substack{i,j\in
\{1,\ldots,n\}\\i\neq j}}a_{ij}\langle x_i,x_j\rangle\le K(K_n)
 \max_{\e_1,\ldots,\e_n\in \{-1,1\}}
\sum_{\substack{i,j\in \{1,\ldots,n\}\\i\neq j}}a_{ij}\e_i\e_j.
\end{equation}
It turns out that $K(K_n)\asymp \log n$. The estimate
$K(K_n)\lesssim \log n$ was proved in~\cite{NRT99,Meg01,KS03,CW04}.
In fact, as shown in~\cite[Thm.~3.7]{AMMN06}, the following stronger
inequality holds true for every $n\times n$ matrix $(a_{ij})$; it
implies that $K(K_n)\lesssim \log n$ by the Cauchy-Schwartz inequality.
\begin{multline*}
\max_{x_1,\ldots,x_n\in S^{n-1}} \sum_{\substack{i,j\in
\{1,\ldots,n\}\\i\neq j}}a_{ij}\langle x_i,x_j\rangle\\\lesssim
\log\left(\frac{\sum_{i\in \{1,\ldots,n\}}\sum_{j\in
\{1,\ldots,n\}\setminus \{i\}}|a_{ij}|}{\sqrt{\sum_{i\in
\{1,\ldots,n\}}\sum_{j\in \{1,\ldots,n\}\setminus
\{i\}}a_{ij}^2}}\right) \max_{\e_1,\ldots,\e_n\in \{-1,1\}}
\sum_{\substack{i,j\in \{1,\ldots,n\}\\i\neq j}}a_{ij}\e_i\e_j.
\end{multline*}
The matching lower bound $K(K_n)\gtrsim \log n$ is due
to~\cite{AMMN06}, improving over a result of~\cite{KS03}.

How can we interpolate between the two extremes~\eqref{eq:bipartite}
and~\eqref{eq:complete}? The Grothendieck constant $K(G)$ depends on
the combinatorial structure of the graph $G$, but at present our
understanding of this dependence is incomplete. The following
general bounds are known.
\begin{equation}\label{eq:omega theta}
\log \omega\lesssim K(G)\lesssim \log \vartheta,
\end{equation}
and
\begin{equation}\label{eq:vallentin}
K(G)\le
\frac{\pi}{2\log\left(\frac{1+\sqrt{(\vartheta-1)^2+1}}{\vartheta-1}\right)},
\end{equation}
where \eqref{eq:omega theta} is due to~\cite{AMMN06}
and~\eqref{eq:vallentin} is due to~\cite{BOV10}.
 Here $\omega$ is the clique number of $G$, i.e., the largest
$k\in \{2,\ldots, n\}$ such that there exists $S\subseteq
\{1,\ldots,n\}$ of cardinality $k$ satisfying $\{i,j\}\in E$ for all
distinct $i,j\in S$, and
\begin{equation}\label{eq:def lovasz}
\vartheta=\min\left\{\max_{i\in \{1,\ldots,n\}}\frac{1}{\langle x_i,y\rangle^2}:\ x_1,\ldots,x_n,y\in S^{n}\ \wedge\ \forall\{i,j\}\in E,\  \langle x_i,x_j\rangle=0\right\}.
\end{equation}

The parameter $\vartheta$ is known as the Lov\'asz theta function of
the complement of $G$; an important graph parameter that was
introduced in~\cite{Lov79}. We refer to~\cite{KMS98}
and~\cite[Thm.~3.5]{AMMN06} for alternative characterizations of
$\vartheta$. It suffices to say here that it was shown
in~\cite{Lov79} that $\vartheta\le \chi$, where $\chi$ is the
chromatic number of $G$, i.e., the smallest integer $k$ such that
there exists a partition $\{A_1,\ldots,A_k\}$ of $\{1,\ldots,n\}$
such that $\{i,j\}\notin E$ for all $(i,j)\in \bigcup_{\ell=1}^k
A_\ell\times A_\ell$.  Note that the upper bound in~\eqref{eq:omega
theta} is superior to~\eqref{eq:vallentin} when $\vartheta$ is
large, but when $\vartheta=2$ the bound~\eqref{eq:vallentin} implies
Krivine's classical bound~\cite{Krivine77} $K_G\le
\pi/\left(2\log\left(1+\sqrt{2}\right)\right)$.

The upper and lower bounds in~\eqref{eq:omega theta} are known to
match up to absolute constants for a variety of graph classes.
Several such sharp Grothendieck-type inequalities are presented in
Sections 5.2 and 5.3 of~\cite{AMMN06} . For example, as explained
in~\cite{AMMN06}, it follows from~\eqref{eq:omega theta}, combined
with combinatorial results of~\cite{Lov79,AO95}, that for every
$n\times n\times n$ $3$-tensor $(a_{ijk})$ we have
$$
\max_{\{x_{ij}\}_{i,j=1}^n\subseteq
S^{n^2-1}}\sum_{\substack{i,j,k\in \{1,\ldots,n\}\\i\neq j\neq k}}
a_{ijk}\left\langle x_{ij},x_{jk}\right\rangle\lesssim
\max_{\{\e_{ij}\}_{i,j=1}^n\subseteq
\{-1,1\}}\sum_{\substack{i,j,k\in \{1,\ldots,n\}\\i\neq j\neq k}}
a_{ijk} \e_{ij}\e_{jk}.
$$

While~\eqref{eq:omega theta} is often a satisfactory asymptotic
evaluation of $K(G)$, this isn't always the case. In particular, it
is unknown whether $K(G)$ can be bounded from below by a function of
$\vartheta$ that tends to $\infty$ as $\vartheta\to \infty$. An
instance in which~\eqref{eq:omega theta} is not sharp is the case of
Erd\H{o}s-R\'enyi~\cite{ER60} random graphs $G(n,1/2)$. For such
graphs we have $\omega\asymp \log n$ almost surely as $n\to \infty$;
see~\cite{Mat70} and~\cite[Sec.~4.5]{AS00}. At the same time, for
$G(n,1/2)$ we have~\cite{Juh82} $\vartheta\asymp\sqrt{n}$ almost
surely as $n\to \infty$. Thus~\eqref{eq:omega theta} becomes in this
case the rather weak estimate $\log\log n\lesssim
K(G(n,1/2))\lesssim \log n$. It turns out~\cite{AB08} that
$K(G(n,1/2))\asymp \log n$ almost surely as $n\to \infty$; we refer
to~\cite{AB08} for additional computations of this type of the
Grothendieck constant of random and psuedo-random graphs. An
explicit evaluation of the Grothendieck constant of certain graph
families can be found in~\cite{LV11}; for example, if $G$ is a graph
of girth $g$ that is not a forest and does not admit $K_5$ as a
minor then $ K(G)=\frac{g\cos(\pi/g)}{g-2}$.

\subsection{Algorithmic consequences}\label{sec:graph alg} Other than being  a natural variant
of the Grothendieck inequality, and hence of intrinsic mathematical
interest, \eqref{eq:def gro graph} has ramifications to discrete
optimization problems, which we now describe.

\subsubsection{Spin glasses}\label{sec:spin}
Perhaps the most natural interpretation of~\eqref{eq:def gro graph}
is in the context of solid state physics, specifically the problem
of efficient computation of ground states of Ising spin glasses. The graph $G$ represents the
interaction pattern of $n$ particles; thus
$\{i,j\}\notin E$ if and only if the particles $i$ and $j$ cannot
interact with each other. Let $a_{ij}$ be the magnitude of the
interaction of $i$ and $j$ (the sign of $a_{ij}$ corresponds to
attraction/repulsion). In the Ising model each particle $i\in
\{1,\ldots,n\}$ has a spin $\e_i\in \{-1,1\}$ and the total energy
of the system is given by the quantity $-\sum_{\{i,j\}\in E}
a_{ij}\e_i\e_j$. A spin configuration $(\e_1,\ldots,\e_n)\in
\{-1,1\}^n$ is called a ground state if it minimizes the total
energy. Thus the problem of finding a ground state is precisely that
of computing the maximum appearing in the right hand side
of~\eqref{eq:def gro graph}. For more information on this topic
see~\cite[pp.~352--355]{LP86}.

Physical systems seek to settle at a ground state, and therefore it
is natural to ask whether it is computationally efficient (i.e.,
polynomial time computable) to find such a ground state, at least
approximately. Such questions have been studied in the physics
literature for several decades; see~\cite{BMRU80,Bar82,Bac84,BBT09}.
In particular, it was shown in~\cite{Bar82} that if $G$ is a planar
graph then one can find a ground state in polynomial time, but
in~\cite{Bac84} it was shown that when  $G$ is the three dimensional
grid then this computational task is NP-hard.

Since the quantity in the left hand side of~\eqref{eq:def gro graph}
is a semidefinite program and therefore can be computed in
polynomial time with arbitrarily good precision, a good bound on
$K(G)$ yields a polynomial time algorithm that computes the energy
of a ground state with correspondingly good approximation guarantee.
Moreover, as explained in~\cite{AMMN06}, the proof of the upper
bound in~\eqref{eq:omega theta} yields a polynomial time algorithm
that finds a spin configuration $(\sigma_1,\ldots,\sigma_n)\in
\{-1,1\}^n$ for which
\begin{equation}\label{eq:ground found}
\sum_{\substack{i,j\in \{1,\ldots,n\}\\\{i,j\}\in E}}
a_{ij}\sigma_i\sigma_j\gtrsim \frac{1}{\log
\vartheta}\cdot\max_{\{\e_i\}_{i=1}^n\subseteq
\{-1,1\}}\sum_{\substack{i,j\in \{1,\ldots,n\}\\\{i,j\}\in E}}
a_{ij}\e_i\e_j.
\end{equation}
An analogous polynomial time algorithm corresponds to the
bound~\eqref{eq:vallentin}. These algorithms yield the best known
efficient methods for computing a ground state of Ising spin glasses
on a variety of interaction graphs.

\subsubsection{Correlation clustering}\label{sec:correlation}
A different interpretation of~\eqref{eq:def gro graph} yields the
best known polynomial time approximation algorithm for the
correlation clustering problem~\cite{BBC02,CGV05}; this connection
is due to~\cite{CW04}. Interpret the graph $G=(\{1,\ldots,n\},E)$ as
the ``similarity/dissmilarity graph" for the items $\{1,\ldots,n\}$,
in the following sense. For $\{i,j\}\in E$ we are given a sign
$a_{ij}\in \{-1,1\}$ which has the following meaning: if $a_{ij}=1$
then $i$ and $j$ are deemed to be similar, and if $a_{ij}=-1$ then
$i$ and $j$ are deemed to be different. If $\{i,j\}\notin E$ then we
do not express any judgement on the similarity or dissimilarity of
$i$ and $j$.

Assume that $A_1,\ldots,A_k$ is a partition (or ``clustering") of
$\{1,\ldots,n\}$. An agreement between this clustering  and our
similarity/dissmilarity judgements is a pair $i,j\in \{1,\ldots,n\}$
such that $a_{ij}=1$ and $i,j\in A_r$ for some $r\in \{1,\ldots,k\}$
or $a_{ij}=-1$ and $i\in A_r$, $j\in A_s$ for distinct $r,s\in
\{1,\ldots,k\}$. A disagreement between this clustering  and our
similarity/dissmilarity judgements is a pair $i,j\in \{1,\ldots,n\}$
such that $a_{ij}=1$ and $i\in A_r$, $j\in A_s$ for distinct $r,s\in
\{1,\ldots,k\}$ or $a_{ij}=-1$ and $i,j\in A_r$ for some $r\in
\{1,\ldots,k\}$. Our goal is to cluster the items while encouraging
agreements and penalizing disagreements. Thus, we wish to find a
clustering of $\{1,\ldots,n\}$ into an unspecified number of
clusters which maximizes the total number of agreements minus the
total number of disagreements.

It was proved in~\cite{CW04} that the case of clustering into two
parts is the ``bottleneck" for this problem: if there were a
polynomial time algorithm that finds a clustering into two parts for
which the total number of agreements minus the total number of
disagreements is at least a fraction $\alpha\in (0,1)$ of the maximum  possible (over
all bi-partitions) total number of agreements minus the
total number of disagreements, then one could find in polynomial
time a clustering which is at least a fraction $\alpha/(2+\alpha)$
of the analogous maximum that is defined without specifying the number of clusters.

One checks that the problem of finding a partition into two clusters
that maximizes the total number of agreements minus the total number
of disagreements is the same as the problem of  computing the
maximum in the right hand side of~\eqref{eq:def gro graph}. Thus the
upper bound in~\eqref{eq:omega theta} yields a polynomial time
algorithm for correlation clustering with approximation guarantee
$O(\log \vartheta)$, which is the best known approximation algorithm
for this problem. Note that when $G$ is the complete graph then the
approximation ratio is $O(\log n)$. As will be explained in
Section~\ref{sec:hardness}, it is known~\cite{KS11} that for every
$\gamma\in (0,1/6)$, if there were a polynomial time algorithm for
correlation clustering that yields an approximation guarantee of
$(\log n)^\gamma$ then there would be an algorithm for
$3$-colorability that runs in time $2^{(\log n)^{O(1)}}$, a
conclusion which is widely believed to be impossible.

\section{Kernel clustering and the propeller conjecture}\label{sec:kernel}

Here we describe a large class of Grothendieck-type inequalities
that is motivated by algorithmic applications to a combinatorial
optimization problem called Kernel Clustering. This problem
originates in machine learning~\cite{SSGB07}, and its only known
rigorous approximation algorithms follow from Grothendieck
inequalities (these algorithms are sharp assuming the UGC). We will
first describe the inequalities and then the algorithmic
application.

Consider the special case of the Grothendieck
inequality~\eqref{eq:def gro} where $A = (a_{ij})$ is an $n\times n$
positive semidefinite matrix. In this case we may assume without
loss of generality that in~\eqref{eq:def gro} $x_i = y_i$ and $\e_i=
\delta_i$ for every $i\in \{1,\ldots,n\}$ since this holds for the
maxima on either side of~\eqref{eq:def gro} (see also the
explanation in~\cite[Sec.~5.2]{AN06}).  It follows
from~\cite{Gro53,Rie74} (see also~\cite{Nes98}) that
 for every $n\times n$ symmetric positive semidefinite
matrix $A=(a_{ij})$ we have
\begin{equation}\label{eq:PSD gro}
\max_{x_1,\ldots,x_n\in S^{n-1}}\sum_{i=1}^n\sum_{j=1}^n a_{ij}
\langle x_i,x_j\rangle \le \frac{\pi}{2}\cdot\max_{\e_1,\ldots,\e_n\in
\{-1,1\}}\sum_{i=1}^n\sum_{j=1}^n a_{ij}\e_i\e_j,
\end{equation}
and that $\frac{\pi}{2}$ is the best possible constant in~\eqref{eq:PSD gro}.

 A natural variant
of~\eqref{eq:PSD gro}  is to replace the numbers $-1,1$ by general
vectors $v_1,\ldots,v_k\in \R^k$, namely one might ask for the
smallest constant $K\in (0,\infty)$ such that for every symmetric
positive semidefinite $n\times n$ matrix $(a_{ij})$ we have:
\begin{equation}\label{eq:our gro with K intro}
\max_{x_1,\ldots,x_n\in S^{n-1}}\sum_{i=1}^n\sum_{j=1}^n a_{ij}
\langle x_i,x_j\rangle \le K\max_{u_1,\ldots,u_n\in
\{v_1,\ldots,v_k\}}\sum_{i=1}^n\sum_{j=1}^n a_{ij}\langle
u_i,u_j\rangle.
\end{equation}
The best constant $K$ in~\eqref{eq:our gro with K intro} can be
characterized as follows. Let $B=\left(b_{ij}= \langle
v_i,v_j\rangle\right)$ be the Gram matrix of $v_1,\ldots,v_k$. Let
$C(B)$ be the maximum over all partitions $\{A_1,\ldots,A_k\}$ of
$\R^{k-1}$ into measurable sets of the quantity
$\sum_{i=1}^k\sum_{j=1}^k b_{ij}\langle z_i,z_j\rangle$, where for
$i\in \{1,\ldots,k\}$ the vector $z_i\in \R^{k-1}$ is the Gaussian
moment of $A_i$, i.e.,
$$z_i=\frac{1}{(2\pi)^{(k-1)/2}}\int_{A_i}xe^{-\|x\|_2^2/2}dx.$$
It was proved in~\cite{KN10} that~\eqref{eq:our gro with K intro}
holds with $K=1/C(B)$ and that this constant is sharp.

Inequality~\eqref{eq:our gro with K intro} with $K=1/C(B)$ is proved
via the following {\it rounding procedure}. Fix unit vectors
$x_1,\ldots,x_n\in S^{n-1}$. Let $G=(g_{ij})$ be a $(k-1)\times n$
random matrix whose entries are i.i.d. standard Gaussian random
variables. Let $A_1,\ldots,A_k\ \subseteq \R^{k-1}$ be a measurable
partition of $\R^{k-1}$ at which $C(B)$ is attained (for a proof
that the maximum defining $C(B)$ is indeed attained,
see~\cite{KN10}). Define a random choice of $u_i\in
\{v_1,\ldots,v_k\}$ by setting $u_i=v_\ell$ for the unique $\ell\in
\{1,\ldots,k\}$ such that $Gx_i\in A_\ell$. The fact
that~\eqref{eq:our gro with K intro} holds with $K=1/C(B)$ is a
consequence of the following fact, whose proof we skip (the full
details are in~\cite{KN10}).
\begin{equation}\label{eq:in expectation}
\E\left[\sum_{i=1}^n\sum_{j=1}^n a_{ij}\langle
u_i,u_j\rangle\right]\ge C(B)\sum_{i=1}^n\sum_{j=1}^n a_{ij} \langle
x_i,x_j\rangle.
\end{equation}

Determining the partition of $\R^{k-1}$ that achieves the value
$C(B)$ is a nontrivial problem in general, even in the special case
when $B = I_k$ is the $k\times k$ identity matrix. Note that in this
case one desires a partition $\{A_1,\ldots,A_k\}$ of $\R^{k-1}$ into
measurable sets so as to maximize the following quantity.
 $$ \sum_{i=1}^k  \left\| \frac{1}{(2\pi)^{(k-1)/2}}\int_{A_i}xe^{-\|x\|_2^2/2}dx  \right\|_2^2.$$
As shown in \cite{KN09, KN10}, the optimal partition is given by
simplicial cones centered at the origin. When $B= I_2$ we have
$C(I_2) = \frac{1}{\pi}$, and the optimal partition of $\R$ into two
cones is the positive and the negative axes. When $B=I_3$ it was
shown in~\cite{KN09} that $C(I_3)=\frac{9}{8\pi}$, and the optimal
partition of $\R^2$ into three cones is the {\it propeller}
partition, i.e., into three cones with angular measure $120^\circ$
each.

Though it might be surprising at first sight, the authors posed
in~\cite{KN09} the {\it propeller conjecture}: for any $k \geq 4$,
the optimal partition of $\R^{k-1}$ into $k$ parts is ${\mathcal P}
\times \R^{k-3}$ where ${\mathcal P}$
 is the propeller partition of $\R^2$.  In other
words, even if one is allowed to use $k$ parts, the propeller
conjecture asserts that the best partition consists of only three
nonempty parts. Recently, this conjecture was solved
positively~\cite{HJN11} for $k=4$, i.e., for partitions of $\R^3$
into four measurable parts. The proof of~\cite{HJN11} reduces the problem to a concrete finite set of numerical inequalities which are then verified with full rigor in a computer-assisted fashion.  Note that this is the first nontrivial
(surprising?) case of the propeller conjecture, i.e., this is the
first case in which we indeed drop one of the four allowed parts in
the optimal partition.



We now describe an application of~\eqref{eq:our gro with K intro} to
the Kernel Clustering problem; a general framework for clustering
massive statistical data so as to uncover a certain hypothesized
structure~\cite{SSGB07}. The problem is defined as follows. Let
$A=(a_{ij})$ be an $n\times n$ symmetric positive semidefinite
matrix which is usually normalized to be centered, i.e.,
$\sum_{i=1}^n\sum_{j=1}^n a_{ij}=0$. The matrix $A$ is often thought
of as the correlation matrix of random variables $(X_1,\ldots,X_n)$
that measure attributes of certain empirical data, i.e.,
$a_{ij}=\E\left[X_iX_j\right]$. We are also given another symmetric
positive semidefinite $k\times k$ matrix $B=(b_{ij})$ which
functions as a hypothesis, or test matrix. Think of $n$ as huge and
$k$ as a small constant. The goal is to cluster $A$ so as to obtain
a smaller matrix which most resembles $B$. Formally, we wish to find
a partition $\{S_1,\ldots,S_k\}$ of $\{1,\ldots,n\}$ so that if we
write $c_{ij}=\sum_{(p,q)\in S_i\times S_j} a_{pq}$ then the
resulting clustered version of $A$ has the maximum correlation
$\sum_{i=1}^k\sum_{j=1}^k c_{ij}b_{ij}$ with the hypothesis matrix
$B$. In words, we form a $k\times k$ matrix $C=(c_{ij})$ by summing
the entries of $A$ over the blocks induced by the given partition,
and we wish to produce in this way a matrix that is most correlated
with $B$. Equivalently, the goal is to evaluate the number:
\begin{equation}\label{eq:def clust}
\Clust(A|B)=
\max_{\sigma:\{1,\ldots,n\}\to\{1,\ldots,k\}}\sum_{i=1}^k\sum_{j=1}^k
a_{ij}b_{\sigma(i)\sigma(j)}.
\end{equation}

The strength of this generic clustering framework is based in part
on the flexibility of adapting the matrix $B$ to the problem at
hand. Various particular choices of $B$ lead to well studied
optimization problems, while other specialized choices of $B$ are
based on statistical hypotheses which have been applied with some
empirical success. We refer to~\cite{SSGB07,KN09} for additional
background and a discussion of specific examples.


In~\cite{KN09} it was shown that there exists a randomized
polynomial time algorithm that takes as input two positive
semidefinite matrices $A,B$ and outputs a number $\alpha$ that
satisfies $\Clust(A|B)\le \E[\alpha]\le
\left(1+\frac{3\pi}{2}\right)\Clust(A|B)$. There is no reason to
believe that the approximation factor of $1+\frac{3\pi}{2}$ is
sharp, but nevertheless prior to this result, which is based
on~\eqref{eq:our gro with K intro}, no constant factor polynomial
time approximation algorithm for this problem was known.

Sharper results can be obtained if we assume that the input matrices
are normalized appropriately. Specifically, assume that $k\ge 3$ and
restrict only to inputs $A$ that are centered, i.e.,
$\sum_{i=1}^n\sum_{j=1}^n a_{ij}=0$, and inputs $B$ that are either
the identity matrix $I_k$, or satisfy
$\sum_{i=1}^k\sum_{j=1}^kb_{ij}=0$ ($B$ is centered as well) and
$b_{ii}=1$ for all $i\in \{1,\ldots,k\}$ ($B$ is ``spherical").
Under these assumptions the output of the algorithm of~\cite{KN09}
satisfies $\Clust(A|B)\le \E[\alpha]\le
\frac{8\pi}{9}\left(1-\frac{1}{k}\right)\Clust(A|B)$. Moreover, it
was shown in~\cite{KN09} that assuming the propeller conjecture and
the UGC, no polynomial time algorithm can achieve an approximation
guarantee that is strictly smaller than
$\frac{8\pi}{9}\left(1-\frac{1}{k}\right)$ (for input matrices
normalized as above). Since the propeller conjecture is known to
hold true for $k=3$ \cite{KN09} and $k=4$ \cite{HJN11}, we know that
the UGC hardness threshold for the above problem is exactly
$\frac{16\pi}{27}$ when $k=3$ and $\frac{2\pi}{3}$ when $k=4$.

A finer, and perhaps more natural, analysis of the kernel clustering
problem can be obtained if we fix the matrix $B$ and let the input
be only the matrix $A$, with the goal being, as before, to
approximate the quantity $\Clust(A|B)$ in polynomial time. Since $B$
is symmetric and positive semidefinite we can find vectors
$v_1,\ldots,v_k\in \R^k$ such that $B$ is their Gram matrix, i.e.,
$b_{ij}=\langle v_i,v_j\rangle$ for all $i,j\in \{1,\ldots,k\}$. Let
$R(B)$ be the smallest possible radius of a Euclidean ball in $\R^k$
which contains $\{v_1,\ldots,v_k\}$ and let $w(B)$ be the center of
this ball. We note that both  $R(B)$ and $w(B)$ can be efficiently
computed by solving an appropriate semidefinite program. Let $C(B)$
be the parameter defined above.


It is shown in~\cite{KN10} that for every fixed symmetric positive
semidefinite $k\times k$ matrix $B$ there exists a randomized
polynomial time algorithm which given an $n\times n$ symmetric
positive semidefinite centered matrix $A$, outputs a number
$\Alg(A)$ such that
$$
\Clust(A|B)\le \E\left[\Alg(A)\right]\le
\frac{R(B)^2}{C(B)}\Clust(A|B).
$$
As we will explain in Section~\ref{sec:hardness}, assuming the UGC
no polynomial time algorithm can achieve an approximation guaranty
strictly smaller than $R(B)^2/C(B)$.



The algorithm of~\cite{KN10}  uses semidefinite programming to
compute the value
\begin{multline}\label{def:SDP}
\SDP(A|B)= \max\left\{\sum_{i=1}^n\sum_{j=1}^n
a_{ij}\left\langle x_i,x_j\right \rangle :\ x_1,\ldots,x_n\in \R^n\
\wedge\ \|x_i\|_2\le 1\ \forall
i\in\{1,\ldots,n\}\right\}\\=\max\left\{\sum_{i=1}^n\sum_{j=1}^n
a_{ij}\left\langle x_i,x_j\right \rangle :\ x_1,\ldots,x_n\in
S^{n-1}\right\},
\end{multline}
where the last equality in~\eqref{def:SDP} holds since the function
$(x_1,\ldots,x_n)\mapsto \sum_{i=1}^n\sum_{j=1}^n a_{ij}\left\langle
x_i,x_j\right \rangle$ is convex (by virtue of the fact that $A$ is
positive semidefinite). We claim that
\begin{equation}\label{eq:guarantee}
\frac{\Clust(A|B)}{R(B)^2}\le \SDP(A|B)\le \frac{\Clust(A|B)}{C(B)},
\end{equation}
which implies that if we output the number $R(B)^2\SDP(A|B)$ we will
obtain a polynomial time algorithm which approximates $\Clust(A|B)$
up to a factor of $\frac{R(B)^2}{C(B)}$. To
verify~\eqref{eq:guarantee} let $x_1^*,\ldots,x_n^*\in S^{n-1}$ and
$\sigma^*:\{1,\ldots,n\}\to \{1,\ldots,k\}$ be such that
$$
\SDP(A|B)=\sum_{i=1}^n\sum_{j=1}^n a_{ij}\left\langle
x_i^*,x_j^*\right \rangle \quad \mathrm{and}\quad
\Clust(A|B)=\sum_{i=1}^n\sum_{j=1}^n a_{ij} b_{\sigma^*(i)
\sigma^*(j)}.
$$

Write $(a_{ij})_{i,j=1}^n=(\langle u_i,u_j\rangle)_{i,j=1}^n$ for
some $u_1,\ldots,u_n\in\R^n$. The assumption that $A$ is centered
means that $\sum_{i=1}^nu_i=0$. The rightmost  inequality
in~\eqref{eq:guarantee} is just the Grothendieck inequality
\eqref{eq:our gro with K intro}. The leftmost inequality
in~\eqref{eq:guarantee} follows from the fact that
$\frac{v_{\sigma^*(i)}-w(B)}{R(B)}$ has norm at most $1$ for all
$i\in \{1,\ldots,n\}$. Indeed, these norm bounds imply that
\begin{eqnarray*}
\SDP(A|B)&\ge&  \sum_{i=1}^n\sum_{j=1}^n a_{ij}\left\langle
\frac{v_{\sigma^*(i)}-w(B)}{R(B)},\frac{v_{\sigma^*(j)}-w(B)}{R(B)}\right
\rangle\\
&=& \frac{1}{R(B)^2} \sum_{i=1}^n\sum_{j=1}^n a_{ij}\left\langle
v_{\sigma^*(i)},v_{\sigma^*(j)}\right
\rangle\\&&-\frac{2}{R(B)^2}\sum_{i=1}^n\left\langle
w(B),v_{\sigma^*(i)}\right\rangle\left\langle u_i,\sum_{j=1}^nu_j
\right\rangle+\frac{\|w(B)\|_2^2}{R(B)^2} \sum_{i=1}^n\sum_{j=1}^n
a_{ij}\\
&=& \frac{\Clust(A|B)}{R(B)^2}.
\end{eqnarray*}

This completes the proof that the above algorithm approximates
efficiently the number $\Clust(A|B)$, but does not address the issue
of how to efficiently compute an assignment
$\sigma:\{1,\ldots,n\}\to \{1,\ldots,k\}$ for which the induced
clustering of $A$ has the required value.  The issue here is to find
efficiently a conical simplicial partition $A_1,\ldots,A_k$ of
$\R^{k-1}$ at which $C(B)$ is attained. Such a partition exists and
may be assumed to be hardwired into the description of the
algorithm. Alternately, the partition that achieves $C(B)$ up to a
desired degree of accuracy can be found by brute-force for fixed $k$
(or $k=k(n)$ growing sufficiently slowly as a function of $n$);
see~\cite{KN10}. For large values of $k$ the problem of computing
$C(B)$ efficiently remains open.


\section{The $L_p$ Grothendieck problem}\label{sec:L_p}

Fix $p\in [1,\infty]$ and consider the following algorithmic
problem. The input is an $n\times n$ matrix $A=(a_{ij})$ whose
diagonal entries vanish, and the goal is to compute (or estimate) in
polynomial time the quantity
\begin{equation}\label{eq:def: M_p}
M_p(A)=\max_{\substack{t_1,\ldots,t_n\in \R\\\sum_{k=1}^n|t_k|^p\le
1}}\sum_{i=1}^n\sum_{j=1}^n a_{ij} t_i
t_j=\max_{\substack{t_1,\ldots,t_n\in \R\\\sum_{k=1}^n|t_k|^p=
1}}\sum_{i=1}^n\sum_{j=1}^n a_{ij} t_i t_j.
\end{equation}
The second equality in~\eqref{eq:def: M_p} follows from a
straightforward convexity argument since the diagonal entries of $A$
vanish. Some of the results described below hold true without the
vanishing diagonal assumption, but we will tacitly make this
assumption here since the second equality in~\eqref{eq:def: M_p}
makes the problem become purely combinatorial when $p=\infty$.
Specifically, if $G=(\{1,\ldots,n\},E)$ is the complete graph then
$M_\infty(A)=\max_{\e_1,\ldots,\e_n\in \{-1,1\}}\sum_{\{i,j\}\in E}
a_{ij}\e_i\e_j$. The results described in Section~\ref{sec:graphs}
therefore imply that there is a polynomial time algorithm that
approximates $M_\infty(A)$ up to a $O(\log n)$ factor, and that it
is computationally hard to achieve an approximation guarantee
smaller than $(\log n)^\gamma$ for all $\gamma\in (0,1/6)$.

There are values of $p$ for which the above problem can be solved in
polynomial time. When $p=2$ the quantity $M_2(A)$ is the largest
eigenvalue of $A$, and hence can be computed in polynomial
time~\cite{GV96,LO96}. When $p=1$ it was shown in~\cite{Alo06} that
it is possible to approximate $M_1(A)$ up to a factor of $1+\e$ in
time $n^{O(1/\e)}$. It is also shown in~\cite{Alo06} that the
problem of $(1+\e)$-approximately  computing $M_1(A)$ is $W[1]$ complete; we refer
to~\cite{DF99} for the definition of this type of hardness result
and just say here that it indicates that a running time of
$c(\e)n^{O(1)}$ is impossible.

The algorithm of~\cite{Alo06} proceeds by showing that for every
$m\in \N$ there exist $y_1,\ldots,y_n\in \frac1{m}\Z$ with
$\sum_{i=1}^n|y_i|\le 1$ and $\sum_{i=1}^n\sum_{j=1}^n
a_{ij}y_iy_j\ge \left(1-\frac{1}{m}\right)M_1(A)$. The number of
such vectors $y$ is
$1+\sum_{k=1}^m\sum_{\ell=1}^k2^\ell\binom{n}{\ell}\binom{k-1}{\ell-1}\le
4n^m$. An exhaustive search over all such vectors will then
approximate $M_1(A)$ to within a factor of $m/(m-1)$ in time
$O(n^m)$. To prove the existence of $y$ fix $t_1,\ldots,t_n\in \R$
with $\sum_{k=1}^n |t_k|= 1$ and $\sum_{i=1}^n\sum_{j=1}^n a_{ij}
t_i t_j=M_1(A)$. Let $X\in \R^n$ be a random vector given by $
\Pr\left[X=\sign(t_j)e_j\right]=|t_j|$ for every $j\in
\{1,\ldots,n\}$. Here $e_1,\ldots,e_n$ is the standard basis of
$\R^n$. Let $\{X_s=(X_{s1},\ldots,X_{sn})\}_{s=1}^m$ be independent
copies of $X$ and set
$Y=(Y_1,\ldots,Y_n)=\frac{1}{m}\sum_{s=1}^mX_s$. Note that if
$s,t\in \{1,\ldots,m\}$ are distinct then for all $i,j\in
\{1,\ldots, n\}$ we have
$\E\left[X_{si}X_{tj}\right]=\sign(t_i)\sign(t_j)|t_i|\cdot|t_j|=t_it_j$.
Also, for every $s\in \{1,\ldots,m\}$ and every distinct $i,j\in
\{1,\ldots,n\}$ we have $X_{si}X_{sj}=0$. Since the diagonal entries
of $A$ vanish it follows that
\begin{equation}\label{eq:noga}
\E\left[\sum_{i=1}^n\sum_{j=1}^n a_{ij}
Y_iY_j\right]=\frac{1}{m^2}\sum_{\substack{s,t\in
\{1,\ldots,m\}\\s\neq t}} \sum_{\substack{i,j\in
\{1,\ldots,n\}\\i\neq j}}
a_{ij}\E\left[X_{si}X_{tj}\right]=\left(1-\frac{1}{m}\right)M_1(A).
\end{equation}
Noting that the vector $Y$ has $\ell_1$ norm at most $1$ and all of
its entries are integer multiples of $1/m$, it follows
from~\eqref{eq:noga} that with positive probability $Y$ will have
the desired properties.

How can we interpolate between the above results for $p\in
\{1,2,\infty\}$? It turns out that there is a satisfactory answer
for $p\in (2,\infty)$ but the range $p\in (1,2)$ remains a mystery.
To explain this write
$\gamma_p=\left(\E\left[|G|^p\right]\right)^{1/p}$, where $G$ is a
standard Gaussian random variable. One computes that
\begin{equation}\label{eq:def gamma_p}
\gamma_p=\sqrt{2}\left(\frac{\Gamma\left(\frac{p+1}{2}\right)}{\sqrt{\pi}}\right)^{1/p}.
\end{equation}
Also, Stirling's formula implies that $\gamma_p^2=\frac{p}{e}+O(1)$
as $p\to \infty$. It follows from~\cite{NS09,GRSW10} that for every
fixed $p\in [2,\infty)$ there exists a polynomial time algorithm
that approximates $M_p(A)$ to within a factor of $\gamma_p^2$, and
that for every $\e\in (0,1)$ the existence of a polynomial time
algorithm that approximates $M_p(A)$ to within a factor
$\gamma_p^2-\e$ would imply that $P=NP$. These results improve over
the earlier work~\cite{KNS10} which designed a polynomial time
algorithm for $M_p(A)$ whose approximation guarantee is
$(1+o(1))\gamma_p^2$ as $p\to \infty$, and which proved a
$\gamma_p^2-\e$ hardness results assuming the UGC rather than $P\neq
NP$.

The following Grothendieck-type inequality was proved in~\cite{NS09}
and independently in~\cite{GRSW10}. For every $n\times n$ matrix
$A=(a_{ij})$ and every $p\in [2,\infty)$ we have
\begin{equation}\label{eq:L_p gro}
\max_{\substack{x_1,\ldots,x_n\in \R^n\\\sum_{k=1}^n\|x_k\|_2^p\le
1}}\sum_{i=1}^n\sum_{j=1}^n a_{ij} \langle x_i,x_j\rangle\le
\gamma_p^2 \max_{\substack{t_1,\ldots,t_n\in
\R\\\sum_{k=1}^n|t_k|^p\le 1}}\sum_{i=1}^n\sum_{j=1}^n a_{ij} t_i
t_j.
\end{equation}
The constant $\gamma_p^2$ in~\eqref{eq:L_p gro} is sharp. The
validity of~\eqref{eq:L_p gro} implies that $M_p(A)$ can be computed
in polynomial time to within a factor $\gamma_p^2$. This follows
since the left hand side of~\eqref{eq:L_p gro} is the maximum of
$\sum_{i=1}^n\sum_{j=1}^n a_{ij} X_{ij}$, which is a linear
functional in the variables $(X_{ij})$, given the constraint that
$(X_{ij})$ is a symmetric positive semidefinite matrix and
$\sum_{i=1}^n X_{ii}^{p/2}\le 1$. The latter constraint is convex
since $p\ge 2$, and therefore this problem falls into the framework
of convex programming that was described in Section~\ref{sec:SDP}.
Thus the left hand side of~\eqref{eq:L_p gro} can be computed in
polynomial time with arbitrarily good precision.

Choosing the specific value $p=3$ in order to illustrate the current
satisfactory state of affairs concretely, the $NP$-hardness
threshold of computing $\max_{\sum_{i=1}^n |x_i|^3\le 1}
\sum_{i=1}^n\sum_{j=1}^n a_{ij}x_ix_j$ equals $2/\sqrt[3]{\pi}$.
Such a sharp $NP$-hardness result (with transcendental hardness
ratio) is quite remarkable, since it shows that the geometric
algorithm presented above probably yields the best possible
approximation guarantee even when one allows any polynomial time
algorithm whatsoever. Results of this type have been known to hold
under the UGC, but this $NP$-hardness result of~\cite{GRSW10} seems
to be the first time that such an algorithm for a simple to state
problem was shown to be optimal assuming $P\neq NP$.

When $p\in [1,2]$ one can easily show~\cite{NS09} that
\begin{equation}\label{eq:L_p gro p<2}
\max_{\substack{x_1,\ldots,x_n\in \R^n\\\sum_{k=1}^n\|x_k\|_2^p\le
1}}\sum_{i=1}^n\sum_{j=1}^n a_{ij} \langle x_i,x_j\rangle=
\max_{\substack{t_1,\ldots,t_n\in \R\\\sum_{k=1}^n|t_k|^p\le
1}}\sum_{i=1}^n\sum_{j=1}^n a_{ij} t_i t_j.
\end{equation}
While the identity~\eqref{eq:L_p gro p<2} seems to indicate the
problem of computing $M_p(A)$ in polynomial time might be easy for
$p\in (1,2)$, the above argument fails since the constraint
$\sum_{i=1}^n X_{ii}^{p/2}\le 1$ is no longer convex. This is
reflected by the fact that despite~\eqref{eq:L_p gro p<2} the
problem of $(1+\e)$-approximately  computing $M_1(A)$ is $W[1]$ complete~\cite{Alo06}. It
remains  open whether for $p\in (1,2)$ one can approximate  $M_p(A)$
in polynomial time up to a factor $O(1)$, and no hardness of
approximation result is known for this problem as well.

\begin{remark}\label{rem:PSD}
If $p\in [2,\infty]$ then for positive semidefinite matrices
$(a_{ij})$ the constant $\gamma_p^2$ in the right hand side
of~\eqref{eq:L_p gro} can be improved~\cite{NS09} to
$\gamma^{-2}_{p^*}$, where here and in what follows $p^*=p/(p-1)$.
For $p=\infty$ this estimate coincides with the classical
bound~\cite{Gro53,Rie74} that we have already encountered
in~\eqref{eq:PSD gro}, and it is sharp in the entire range $p\in
[2,\infty]$. Moreover, this bound shows that there exists a
polynomial time algorithm that takes as input a positive
semidefinite matrix $A$ and outputs a number that is guaranteed to
be within a factor $\gamma^{-2}_{p^*}$ of $M_p(A)$. Conversely, the
existence of a polynomial time algorithm for this problem whose
approximation guarantee is strictly smaller than $\gamma^{-2}_{p^*}$
would contradict the UGC~\cite{NS09}.
\end{remark}

\begin{remark}\label{rem:bilinear}
The bilinear variant of~\eqref{eq:L_p gro} is an immediate
consequence of the Grothendieck inequality~\eqref{eq:def gro}.
Specifically, assume that $p,q\in [1,\infty]$ and
$x_1,\ldots,x_m,y_1,\ldots,y_n\in \R^{m+n}$ satisfy
$\sum_{i=1}^m\|x_i\|_2^p\le 1$ and $\sum_{j=1}^n\|y_j\|_2^q\le 1$.
Write $\alpha_i=\|x_i\|_2$ and $\beta_j=\|y_j\|_2$. For an $m\times
n$ matrix $(a_{ij})$ the Grothendieck inequality provides
$\e_1,\ldots,\e_m,\delta_1,\ldots,\delta_n\in \{-1,1\}$ such that $
\sum_{i=1}^m\sum_{j=1}^n a_{ij} \langle x_i,y_j\rangle\le K_G
\sum_{i=1}^m\sum_{j=1}^n a_{ij}\alpha_i\beta_j\e_i\delta_j$. This
establishes the following inequality.
\begin{equation}\label{eq:pq-grothendieck}
\max_{\substack{\{x_i\}_{i=1}^m,\{y_j\}_{j=1}^n\subseteq
\R^{n+m}\\\sum_{i=1}^m \|x_i\|_2^p\le 1\\\sum_{j=1}^n\|y_j\|_2^q\le
1}}\sum_{i=1}^m\sum_{j=1}^n a_{ij}\langle x_i,y_j\rangle\le K_G\cdot
\max_{\substack{\{s_i\}_{i=1}^m,\{t_j\}_{j=1}^n\subseteq
\R\\\sum_{i=1}^m |s_i|^p\le 1\\\sum_{j=1}^n |t_j|^q\le
1}}\sum_{i=1}^m\sum_{j=1}^n a_{ij} s_it_j.
\end{equation}
Observe that the maximum on the right hand side
of~\eqref{eq:pq-grothendieck} is $\|A\|_{p\to q^*}$; the operator
norm of $A$ acting as a linear operator from $(\R^m,\|\cdot\|_p)$ to
$(\R^n,\|\cdot \|_{q^*})$. Moreover, if $p,q\ge 2$ then the left
hand side of~\eqref{eq:pq-grothendieck} can be computed in
polynomial time. Thus, for $p\ge 2\ge r\ge 1$, the generalized
Grothendieck inequality~\eqref{eq:pq-grothendieck} yields a
polynomial time algorithm that takes as input an $m\times n$ matrix
$A=(a_{ij})$ and outputs a number that is guaranteed to be within a
factor $K_G$ of $\|A\|_{p\to r}$. This algorithmic task has been
previously studied in~\cite{NWY00} (see
also~\cite[Sec.~4.3.2]{Nem07}), where for $p\ge 2\ge r\ge 1$ a
polynomial time algorithm was designed that approximates
$\|A\|_{p\to r}$ up to a factor $3\pi/\left(6\sqrt{3}-2\pi\right)\in
[2.293,2.294]$.  The above argument yields the approximation factor
$K_G<1.783$ as a formal consequence of the Grothendieck inequality.
The complexity of the problem of approximating $\|A\|_{p\to r}$ has
been studied in~\cite{BV11}, where it is shown that if either $p\ge
r>2$ or $2>p\ge r$ then it is $NP$-hard to approximate $\|A\|_{p\to
r}$ up to any constant factor, and unless $3$-colorability can be
solved in time $2^{(\log n)^{O(1)}}$, for any $\e\in (0,1)$ no
polynomial time algorithm can approximate $\|A\|_{p\to r}$ up to
$2^{(\log n)^{1-\e}}$.
\end{remark}

\begin{remark}\label{rem:convexity}
Let $K\subseteq \R^n$ be a compact and convex set which is invariant
under reflections with respect to the coordinate hyperplanes. Denote
by $C_K$ the smallest $C\in (0,\infty)$ such that for every $n\times
n$ matrix $(a_{ij})$ we have
\begin{equation}\label{eqK-grothendieck}
\max_{\substack{x_1,\ldots,x_n\in
\R^n\\(\|x_1\|_2,\ldots,\|x_n\|_2)\in K}}\sum_{i=1}^n\sum_{j=1}^n
a_{ij}\langle x_i,x_j\rangle\le C \max_{\substack{t_1,\ldots,t_n\in
\R\\(t_1,\ldots,t_n)\in K}}\sum_{i=1}^n\sum_{j=1}^n a_{ij} t_it_j.
\end{equation}
 Such generalized Grothendieck
inequalities are investigated in~\cite{NS09}, where bounds on $C_K$
are obtained under certain geometric assumptions on $K$. These
assumptions are easy to verify when $K=\{x\in \R^n:\ \|x\|_p\le
1\}$, yielding~\eqref{eq:L_p gro}. More subtle inequalities of this
type for other convex bodies $K$ are discussed in~\cite{NS09}, but
we will not describe them here. The natural bilinear version
of~\eqref{eqK-grothendieck} is: if $K\subseteq \R^m$ and $L\subseteq
\R^n$ are compact and convex sets that are invariant under
reflections with respect to the coordinate hyperplanes then let
$C_{K,L}$ denote the smallest constant $C\in (0,\infty)$ such that
for every $m\times n$ matrix $(a_{ij})$ we have
\begin{equation}\label{eq:KL-grothendieck}
\max_{\substack{\{x_i\}_{i=1}^m,\{y_j\}_{j=1}^n\subseteq
\R^{n+m}\\(\|x_1\|_2,\ldots,\|x_m\|_2)\in
K\\(\|y_1\|_2,\ldots,\|y_n\|_2)\in L}}\sum_{i=1}^m\sum_{j=1}^n
a_{ij}\langle x_i,y_j\rangle\le C
\max_{\substack{\{s_i\}_{i=1}^m,\{t_j\}_{j=1}^n\subseteq
\R\\(s_1,\ldots,s_m)\in K\\(t_1,\ldots,t_n)\in
L}}\sum_{i=1}^m\sum_{j=1}^n a_{ij} s_it_j.
\end{equation}
The argument in Remark~\ref{rem:bilinear} shows that $C_{K,L}\le
K_G$. Under certain geometric assumptions on $K,L$ this bound can be
improved~\cite{NS09}.
\end{remark}

\newcommand{\C}{\mathbb C}

\section{Higher rank Grothendieck inequalities}\label{sec:rank}

We have already seen several variants of the classical Grothendieck
inequality~\eqref{eq:def gro}, including the Grothendieck inequality
for graphs~\eqref{eq:def gro graph}, the variant of the positive
semidefinite Grothendieck inequality arising from the Kernel
Clustering problem~\eqref{eq:our gro with K intro}, and Grothendieck
inequalities for convex bodies other than the cube~\eqref{eq:L_p
gro}, \eqref{eq:pq-grothendieck}, \eqref{eqK-grothendieck},
\eqref{eq:KL-grothendieck}. The literature contains additional
 variants of the Grothendieck inequality, some of which
will be described in this section.

Let $G=(\{1,\ldots,n\},E)$ be a graph and fix $q,r\in \N$.
Following~\cite{BOV10}, let $K(q\to r,G)$ be the smallest constant
$K\in (0,\infty)$ such that for every $n\times n$ matrix
$A=(a_{ij})$ we have
\begin{equation}\label{eq:def gro rank}
\max_{x_1,\ldots,x_n\in S^{q-1}} \sum_{\substack{i,j\in
\{1,\ldots,n\}\\\{i,j\}\in E}}a_{ij}\langle x_i,x_j\rangle\le K
\max_{y_1,\ldots,y_n\in S^{r-1}} \sum_{\substack{i,j\in
\{1,\ldots,n\}\\\{i,j\}\in E}}a_{ij}\langle y_i,y_j\rangle .
\end{equation}
Set also $K(r,G)=\sup_{q\in \N} K(q\to r,G)$. We similarly define
$K^+(q\to r,G)$ to be the smallest constant $K\in (0,\infty)$
satisfying~\eqref{eq:def gro rank} for all positive semidefinite
matrices $A$, and correspondingly $K^+(r,G)=\sup_{q\in \N} K^+(q\to
r,G)$.

To link these definitions to what we have already seen in this
article, observe that $K_G$ is the supremum of $K(1,G)$ over all
finite bipartite graphs $G$, and due to the results described in
Section~\ref{sec:kernel} we have
\begin{equation}\label{eq:sup K^+}
\sup_{n\in \N}K^+\left(r,K_n^{\circlearrowleft} \right)=\sup_{n\in
\N}\sup_{x_1,\ldots,x_n\in S^{r-1}}\frac{1}{C\left(\langle
x_i,x_j\rangle)_{i,j=1}^n\right)},
\end{equation}
where $K_n^{\circlearrowleft}$ is the complete graph on $n$-vertices
with self loops. Recall that the definition of $C(B)$ for a positive
semidefinite matrix $B$ is given in the paragraph
following~\eqref{eq:our gro with K intro}.

The most important special case of~\eqref{eq:def gro rank} is when
$r=2$, since the  supremum of $K(2,G)$ over all finite bipartite
graphs $G$, denoted $K_G^\C$, is the complex Grothendieck constant,
a fundamental quantity whose value has been investigated
in~\cite{Gro53,LP68,Pisier78,Haa87,Kon90}. The best known bounds on
$K_G^\C$ are $1.338<K_G^\C< 1.4049$; see~\cite[Sec.~4]{Pis11} for
more information on this topic. We also refer to~\cite{Dav85,Ton86}
for information of the constants $K(2q\to 2,G)$ where $G$ is a
bipartite graph. The supremum of $K(q\to r,G)$ over all biparpite
graphs $G$ was investigated in~\cite{Kri79} for $r=1$ and
in~\cite{Kon90} for $r=2$; see also~\cite{Kon92} for a unified
treatment of these cases. The higher rank constants $K(q\to r,G)$
when $G$ is bipartite were introduced in~\cite{BBT09}.
Definition~\eqref{eq:def gro rank} in full generality is due
to~\cite{BOV10} where several estimates on $K(q\to r,G)$ are given.
One of the motivations of~\cite{BOV10} is the case $r=3$ (and $G$ a
subgraph of the grid $\Z^3$), based on the connection to the
polynomial time approximation of ground states of spin glasses as
described in Section~\ref{sec:spin}; the case $r=1$ was discussed in
Section~\ref{sec:spin} in connection with the Ising model, but the
case $r=3$ corresponds to the more physically realistic Heisenberg
model of vector-valued spins. The parameter $\sup_{n\in
\N}K^+\left(r,K_n^{\circlearrowleft} \right)$ (recall~\eqref{eq:sup
K^+}) was studied in~\cite{BBT09} in the context of quantum
information theory, and in~\cite{BOV10-1} it was shown that
\begin{equation}\label{eq:improve PSD}
K^+\left(1,K_n^{\circlearrowleft} \right)\le
\frac{\pi}{n}\left(\frac{\Gamma((n+1)/2)}{\Gamma(n/2)}\right)^2=\frac{\pi}{2}-\frac{\pi}{4n}+O\left(\frac{1}{n^2}\right),
\end{equation}
and
$$
\sup_{n\in \N}K^+\left(r,K_n^{\circlearrowleft}
\right)=\frac{r}{2}\left(\frac{\Gamma(r/2)}{\Gamma((r+1)/2)}\right)^2=1+\frac{1}{2r}+O\left(\frac{1}{r^2}\right).
$$
We refer to~\cite{BOV10-1} for a corresponding UGC hardness result. Note that~\eqref{eq:improve PSD} improves over~\eqref{eq:PSD gro}
for fixed $n\in \N$.
\section{Hardness of approximation}\label{sec:hardness}

We have seen examples of how Grothendieck-type inequalities yield
upper bounds on the best possible polynomial time approximation
ratio of certain optimization problems. From the algorithmic  and
computational complexity viewpoint it is interesting to prove
computational lower bounds as well, i.e., results that rule out the
existence of efficient algorithms achieving a certain approximation
guarantee. Such results are known as hardness or inapproximability
results, and as explained in Section~\ref{sec:complexity
assumptions}, at present  the state of the art allows one to prove
such results while relying on complexity theoretic assumptions such
as $P \not= NP$ or the Unique Games Conjecture. A nice feature of
the known hardness results for problems in which a Grothendieck-type
inequality has been applied is that often the hardness results
(lower bounds) exactly match the approximation ratios (upper
bounds). In this section we briefly review the known hardness
results for optimization problems associated with Grothendieck-type
inequalities.

Let $K_{n,n}$-${\sf QP}$ denote the optimization problem associated
with the classical Grothendieck inequality (the acronym ${\sf QP}$
stands for ``quadratic programming"). Thus, in the problem
$K_{n,n}$-${\sf QP}$ we are given an $n\times n$ real matrix
$(a_{ij})$ and the goal is to determine the quantity
$$
\max\left\{\sum_{i=1}^m\sum_{j=1}^n a_{ij} \e_i\delta_j:
\{\e_i\}_{i=1}^m,\{\delta_j\}_{j=1}^n\subseteq \{-1,1\}\right\}.
$$

As explained in~\cite{AN06}, the MAX DICUT problem can be framed as
a special case of the problem $K_{n,n}$-${\sf QP}$. Hence, as a
consequence of~\cite{Haa01}, we know that for every $\e\in (0,1)$,
assuming $P \not= NP$ there is no polynomial time algorithm that
approximates the $K_{n,n}$-${\sf QP}$ problem within ratio
$\frac{13}{12}-\e$. In~\cite{KO09} it is shown that the lower
bound~\eqref{eq:reeds} on the Grothendieck constant can be
translated into a hardness result, albeit relying on the Unique
Games Conjecture. Namely, letting $\eta_0$ be as
in~\eqref{eq:reeds}, for every $\e\in (0,1)$ assuming the UGC there
is no polynomial time algorithm that approximates the
$K_{n,n}$-${\sf QP}$ problem within a ratio
$\frac{\pi}{2}e^{\eta_0^2}-\e$.

We note that all the hardness results cited here rely on the
well-known paradigm of {\it dictatorship testing}. A lower bound on
the integrality gap of a semidefinite program, such as the estimate
$K_G\ge \frac{\pi}{2}e^{\eta_0^2}$, can be translated into a
probabilistic test to check whether a function $f: \{-1,1\}^n
\mapsto \{-1,1\}$ is a dictatorship, i.e., of the form $f(x)=x_i$
for some fixed $i\in \{1,\ldots,n\}$. If $f$ is indeed a
dictatorship, then the test passes with probability $c$ and if $f$
is ``far from a dictator" (in a formal sense that we do not describe
here), the test passes with probability at most $s$. The ratio $c/s$
 corresponds exactly to the UGC-based hardness lower bound. It is well-known how to
prove a UGC-based hardness result once we have the appropriate
dictatorship test; see the survey~\cite{Kho10}.

The above quoted result of~\cite{KO09} relied on explicitly knowing
the lower bound construction~\cite{Ree91} leading to the estimate
$K_G\ge \frac{\pi}{2}e^{\eta_0^2}$. On the other hand,
in~\cite{RS09}, building on the earlier work~\cite{Prasad}, it is
shown that {\it any} lower bound on the Grothedieck constant can be
translated into a UGC-based hardness result, even without explicitly
knowing the construction! Thus, modulo the UGC, the best polynomial
time algorithm to approximate the $K_{n,n}$-${\sf QP}$ problem is
via the Grothendieck inequality, even though we do not know the
precise value of $K_G$. Formally, for every $\e\in (0,1)$, assuming
the UGC there is no polynomial time algorithm that approximates the
$K_{n,n}$-${\sf QP}$ problem within a factor $K_G-\e$.

Let $K_{n,n}$-${\sf QP}_{\sf PSD}$ be the special case of the
$K_{n,n}$-${\sf QP}$ problem where the input matrix $(a_{ij})$ is
assumed to be positive semidefinite. By considering matrices that
are Laplacians of graphs one sees that the MAX CUT problem is a
special case of the problem $K_{n,n}$-${\sf QP}_{\sf PSD}$
(see~\cite{KN09}). Hence, due to~\cite{Haa01}, we know that for
every $\e\in (0,1)$, assuming $P \not= NP$ there is no polynomial
time algorithm that approximates the $K_{n,n}$-${\sf QP}_{\sf PSD}$
problem within ratio $\frac{17}{16}-\e$. Moreover, it is proved
in~\cite{KN09} that for every $\e\in (0,1)$, assuming the UGC there
is no polynomial time algorithm that approximates the
$K_{n,n}$-${\sf QP}_{\sf PSD}$ problem within ratio
$\frac{\pi}{2}-\e$, an optimal hardness result due to the positive
semidefinite Grothendieck inequality~\eqref{eq:PSD gro}. This
follows from the more general results for the Kernel Clustering
problem described later.

Let $(a_{ij})$ be an $n\times n$ real matrix with zeroes on the
diagonal. The $K_n$-${\sf QP}$
 problem seeks to determine the quantity
$$
\max\left\{\sum_{i=1}^m\sum_{j=1}^n a_{ij} \e_i\e_j:
\{\e_i\}_{i=1}^m\subseteq \{-1,1\}\right\}.
$$
In~\cite{KS11} it is proved that for every $\gamma\in (0,1/6)$,
assuming that $NP$ does not have a $2^{(\log n)^{O(1)}}$ time
deterministic algorithm, there is no polynomial time algorithm that
approximates the $K_n$-${\sf QP}$ problem within ratio $(\log
n)^\gamma$. This improves over~\cite{ABKHS05} where a hardness
factor of $(\log n)^c$ was proved, under the same complexity
assumption, for an unspecified universal constant $c>0$. Recall
that, as explained in Section~\ref{sec:graphs}, there is an
algorithm for $K_n$-${\sf QP}$ that achieves a ratio of $O(\log n)$,
so there remains an asymptotic gap in our understanding of the
complexity of the $K_n$-${\sf QP}$ problem. For the maximum acyclic
subgraph problem, as discussed in Section~\ref{sec:acyclic}, the gap
between the upper and lower bounds is even larger. We have already
seen that an approximation factor of $O(\log n)$ is achievable, but
from the hardness perspective we know due to~\cite{PY91} that there
exists $\e_0>0$ such that assuming $P\neq NP$ there is no polynomial
time algorithm for the maximum acyclic subgraph problem that
achieves an approximation ratio less than $1+\e_0$. In~\cite{GMR08}
it was shown that assuming the UGC there is no polynomial time
algorithm for the maximum acyclic subgraph problem that achieves any
constant approximation ratio.

Fix $p\in (0,\infty)$. As discussed in Section~\ref{sec:L_p}, the
$L_p$ Grothendieck problem is as follows. Given an $n\times n$ real
matrix $A=(a_{ij})$ with zeros on the diagonal, the goal is to
determine the quantity $M_p(A)$ defined in~\eqref{eq:def: M_p}. For
$p\in (2,\infty)$ it was shown in~\cite{GRSW10} that for every
$\e\in (0,1)$, assuming $P \not= NP$ there is no polynomial time
algorithm that approximates the $L_p$ Grothendieck problem  within a
ratio $\gamma_p^2 -\e$. Here $\gamma_p$ is defined as
in~\eqref{eq:def gamma_p}. This result (nontrivially) builds on  the
previous result of~\cite{KNS10} that obtained the same conclusion
while assuming the UGC rather than $P\neq NP$.

For the Kernel Clustering problem with a $k \times k$ hypothesis
matrix $B$, an optimal hardness result is obtained in~\cite{KN10} in
terms of the parameters $R(B)$ and $C(B)$ described in
Section~\ref{sec:kernel}. Specifically for a fixed $k\times k$
symmetric positive semidefinite matrix $B$ and for every $\e\in
(0,1)$, assuming the UGC there is no polynomial time algorithm that,
given an $n\times n$ matrix $A$ approximates the quantity
$\Clust(A|B)$  within ratio $\frac{R(B)^2}{C(B)}-\e$. When $B = I_k$
is the $k \times k$ identity matrix, the following hardness result
is obtained in~\cite{KN09}. Let $\e > 0$ be an arbitrarily small
constant. Assuming the UGC, there is no polynomial time algorithm
that approximates $\Clust(A|I_2)$ within ratio $\frac{\pi}{2} - \e$.
Similarly, assuming the UGC there is no polynomial time algorithm
that approximates $\Clust(A|I_3)$ within ratio $\frac{16
\pi}{27}-\e$, and, using also the solution of the propeller
conjecture in $\R^3$ given in~\cite{HJN11}, there is no polynomial
time algorithm that approximates $\Clust(A|I_4)$ within ratio
$\frac{2 \pi}{3}-\e$. Furthermore, for $k\ge 5$, assuming the
propeller conjecture and the UGC, there is no polynomial time
algorithm that approximates $\Clust(A|I_k)$ within ratio $\frac{8
\pi}{9} \left( 1- \frac{1}{k} \right)-\e$.

\bigskip

\noindent{\bf Acknowledgements.} We are grateful to Oded Regev for
many helpful suggestions.

\bibliographystyle{abbrv}
\bibliography{Grothendieck-CPAM}

\end{document}